\newtheorem{theorem}{Theorem}
\newtheorem{corollary}[theorem]{Corollary}
\newtheorem{definition}[theorem]{Definition}
\newtheorem{lemma}[theorem]{Lemma}
\newtheorem{notation}[theorem]{Notation}
\newtheorem{proposition}[theorem]{Proposition}
\newtheorem{remark}[theorem]{Remark}
\newenvironment{proof}[1][Proof]{\textbf{#1.} }{\ \rule{0.5em}{0.5em}}
\newcommand{\G} {\ensuremath{\mathcal{G}}}
\newcommand{\F} {\ensuremath{\mathcal{F}}}
\newcommand{\Prob} {\ensuremath{\mathbb{P}}}
\newcommand{\R} {\ensuremath{\mathbb{R}}}
\begin{document}

\title{Complete duality for quasiconvex dynamic risk measures on modules of
the $L^{p}$-type}
\author{Marco Frittelli\thanks{%
Dipartimento di Matematica, Universit\`{a} degli Studi di Milano, marco.frittelli@unimi.it.} \and %
Marco Maggis\thanks{%
Dipartimento di Matematica Universit\`{a} degli Studi di Milano, marco.maggis@unimi.it.}}
\maketitle

\begin{abstract}
In the conditional setting we provide a complete duality between quasiconvex risk
measures defined on $L^{0}$ modules of the $L^{p}$ type and the appropriate class of dual functions. This is based  on a general result which extends the usual Penot-Volle representation for quasiconvex real valued maps.  
\end{abstract}

\noindent \textbf{Keywords}: quasiconvex functions, dual representation,
complete duality, $L^{0}$-modules, dynamic risk measures.

\noindent \textbf{MSC (2010):} primary 46N10, 91G99, 91B06 46A20; secondary
60H99,46N30, 91B02, 46E30.

\section{Introduction}

The already-fifteen-years-old theory of risk measures is still originating
many questions and springing out lots of new problems which trigger off the interest of researchers. Recently Kupper and Schachermayer \cite{KS10}
showed that in a dynamic framework only the entropic risk measure is in
agreement with all the usual assumptions such as cash additivity,
monotonicity, convexity, law invariance and time consistency. It's thus
natural to question if these assumptions are too restrictive and indeed cash additivity was the first to be doubted and weakened to cash subadditivity, by El Karoui and Ravanelli \cite{ER09}.

Currently a debate between convexity and quasiconvexity is trying to give a better explanation to the concept of diversification, see Cerreia-Vioglio,
Maccheroni, Marinacci and Montrucchio \cite{CMMMa}. On one hand
quasiconvexity can be considered as the mathematical translation of the
principle of diversification; on the other, under the cash additivity
assumption, convexity and quasiconvexity are equivalent. Once we give up
cash additivity we are automatically induced to enlarge the class of
feasible risk measures to the class of quasiconvex funtionals. In \cite{FMP12} the authors show that on the level of distributions there do not exist any convex lower semicontinuous risk measure, but they provide a huge class of quasiconvex lower semicontinuous risk measures which contains as particular cases  the  Value at Risk, the Worst Case risk measure and the Certainty Equivalents. In \cite%
{CMMMa}, the representation of a quasiconvex cash subadditive (real valued)
risk measure $\rho $ is written in terms of the quasiaffine dual function $R$. 

\bigskip

A \textit{complete duality} for real valued quasiconvex functionals has been firstly established in \cite{CMMMb}:
the idea is to prove a \textit{one to one} relationship between quasiconvex
monotone functionals $\rho $ and the function $R$ in the dual
representation. Obviously $R$ will be unique only in an opportune class of
maps satisfying certain properties. In Decision Theory the function $R$ can
be interpreted as the decision maker's index of uncertainty aversion: the uniquesness of $R$ becomes crucial (see \cite{CMMMb} and \cite{KD}) if we want to guarantee a robust dual representation of $\rho$ characterized in terms of the unique $R$. 

\bigskip

In the conditional setting, where the maps take values in a set of random variables - for example, $\rho :L^{p}(\Omega ,\mathcal{F}_{T},\mathbb{P}%
)\rightarrow L^{p}(\Omega ,\mathcal{F}_{t},\mathbb{P})$, $t<T$ - the
representation of dynamic quasiconvex maps is obtained adopting a similar function $R$ (see \cite{FM09}). The particular case of the Conditional Certainty
Equivalent is treated in Frittelli Maggis \cite{FM10}. 
We stress that this framework is very relevant in all applications involving dynamic features and as far as we know a complete duality in this framework was lacking in literature.

\bigskip

As described in \cite{KD} topological vector spaces are the utmost general environment in which we are naturally led to embed the theory of risk preferences in the static case. On the other hand   once we shift the problem to the conditional case (as in \cite{FM09}) the mathematical challenges become harder and harder so that topological vector spaces appears as unsuitable structures.  Recently Filipovic, Kupper and Vogelpoth \cite{FKV10} discussed many advantages of working in a
module framework whenever dealing with the conditional setting. The
intuition behind the use of modules is simple and natural: given a
probability space $(\Omega ,\mathcal{F}_{T},\mathbb{P})$ and a filtration $%
\mathcal{F}=\left\{ \mathcal{F}_{t}\right\} _{0\leq t\leq T}$ , suppose that
a set $L$ of time-$T$ maturity contingent claims is fixed (for concreteness
let $L=L^{p}(\mathcal{F}_{T})$) and an agent is computing the risk of a
portfolio at an intermediate time $t<T$. All the $\mathcal{F}_{t}$%
-measurable random variables are going to be known at time $t$, thus the $%
\mathcal{F}_{t}$ measurable random variables will act as scalars in the
process of diversification of our portfolio, forcing to consider the new set 
\begin{eqnarray*}
L_{\mathcal{F}_{t}}^{p}(\mathcal{F}_{T}) &:=&L^{0}(\Omega ,\mathcal{F}_{t},%
\mathbb{P})\cdot L^{p}(\Omega ,\mathcal{F}_{T},\mathbb{P}) \\
&=&\left\{ YX\mid Y\in L^{0}(\Omega ,\mathcal{F}_{t},\mathbb{P})\text{, }%
X\in L^{p}(\Omega ,\mathcal{F}_{T},\mathbb{P})\right\}
\end{eqnarray*}%
as the domain of the risk measures. This product structure is exactly the
one that constitutes the nature of $L^{0}$-modules. %
The most significant contribution on this topic comes from the extensive research produced by Guo from 1992 until today. An useful reference is \cite{Guo1} on which the most important results are resumed and compared with the present literature and in particular with the recent
development provided by Filipovic Kupper and Vogelpoth \cite{FKV09}. The key
point in both \cite{FKV09} and \cite{Guo} is to provide a conditional form
of the Hyperplane Separation Theorems. It is well known that many
fundamental results in Mathematical Finance rely on these: for instance
Arbitrage Theory and the duality results on risk measure or utility
maximization.

In \cite{FKV09} and \cite{Guo} the authors brilliantly succeed in the task
of giving a  topological structure to $L^{0}$-modules and
to extend those theorems from functional analysis, which are relevant for
financial applications. Once this rigorous analytical background has been
carefully built up, it is possible to develop it further and obtain other
interesting results and applications. 

\bigskip

The overall aim of this paper is  the establishment of a complete duality for evenly quasiconvex \emph{conditional} risk measures (Theorem \ref{RM}). Our findings may be adapted in a \textit{dynamic framework} in Decision Theory (see \cite{CMMMb}). As explained in Section \ref{CondRM} evenly quasiconvexity of the map $\rho$ is an assumption weaker than lower semicontinuity and quasiconvexity.
\\As already mentioned,  uniqueness of the representation is a delicate issue in the conditional case: once embedded in the the of $L^0$-modules  the complete duality for conditional risk measures (see Theorem \ref{RM} for the precise statement), perfectly matches what had been obtained in \cite{CMMMb} for the static case and provide great evidences of the power of the module approach.

Let $\G\subset \F$ be two sigma algebras we deduce under suitable conditions that $\rho :L_{\mathcal{G}}^{p}(\mathcal{F}%
)\rightarrow L^{0}(\mathcal{G})$ is an evenly quasiconvex conditional risk
measure \emph{if and only if} 
\begin{equation}
\rho (X)=\sup_{Q\in \mathcal{P}^{q}}R\left( E_Q\left[ -X|%
\mathcal{G}\right] ,Q\right)  \label{introRM}
\end{equation}%
where $\mathcal{P}^{q}$ is a subset of probabilities $Q$ such that $E[\frac{dQ}{d\Prob}|\G]=1$ and $R$ is \emph{unique} in the class $%
\mathcal{M}(L^{0}(\mathcal{G})\times \mathcal{P}^{q})$ (see the Definition %
\ref{KK}). In particular $R$ will take the form 
\begin{equation*}
R(Y,Q)=\inf_{\xi \in L_{\mathcal{G}}^{p}(\mathcal{F})}\left\{ \rho (\xi
)\mid E_{Q}\left[ -\xi  | \mathcal{G}\right] =Y\right\}
\end{equation*}
\emph{A posteriori} if we add the assumption $\rho(X+\alpha)=\rho(X)-\alpha$ for every $\alpha\in L^0(\G)$, then the quasiconvex map $\rho$ is automatically convex and $R(Y,Q)=Y-\rho^{*}(-Q)$ (see Corollary \ref{corCASmod}) so that we recover the dual representation proved in \cite{Sca}.
\\  The function $R$ has also an interesting interpretation related to the dual representation of convex risk measures.  It's not hard to show that for every $X\in L_{\mathcal{G%
}}^{p}(\mathcal{F}),$ $Q\in \mathcal{P}^{q}$ and any map $\rho :L_{\mathcal{G%
}}^{p}(\mathcal{F})\rightarrow L^{0}(\mathcal{G})$ we have: 
\begin{equation}
R\left( E_Q\left[ -X|\mathcal{G}\right] ,Q\right) \geq E_Q
\left[ -X|\mathcal{G}\right] -\rho ^{\ast }(-Q),
\label{ddd}
\end{equation}
where $\rho^{\ast}$ is the convex conjugate of $\rho$.
\\From equation (\ref{introRM}) we deduce that whenever the preferences of an agent are described by a quasiconvex - not convex - risk measure we cannot recover the risk only taking a \emph{supremum} of the Fenchel conjugate, i.e. of the RHS of (\ref{ddd}), over all the possible probabilistic
scenarios. We shall need a more cautious approach represented by the new penalty function $R\left( E_Q\left[ -X|\mathcal{G}\right],Q\right) $. The quantity $R(Y,Q)$ is therefore the reserve amount required at the intermediate time $t$ ($\mathcal{F}_{t}=\mathcal{G}$) under the scenario $Q$, to cover an expected loss $Y\in L^{0}(\mathcal{G})$ in the
future.

\bigskip

The paper is organized as follows. In Section \ref{prel} we provide some preliminary notions and facts: a short review about $L^{0}$-modules of the $L^p$ type and the concept of conditionally evenly convex set. Section \ref{CondRM} is devoted to the regularity, quasiconvexity and continuity assumptions of the maps $\rho :L^p_{\G}(\F)\rightarrow L^{0}(\mathcal{G})$. In Section \ref{CompSect} we state the complete duality for quasiconvex conditional risk measures. We include in
Section \ref{complements} some complementary results. Section \ref{conti} is devoted to the proofs of the main contributions of the paper. Two more technical lemmas are deferred to the Appendix.


\section{Notations, setting and topological properties}\label{prel}

The probability space $(\Omega ,\mathcal{F},\mathbb{P})$ is fixed throughout
this chapter and $\mathcal{G\subseteq F}$ is any sigma algebra contained in $%
\mathcal{F}$. We denote with $L^{0}(\Omega ,\mathcal{F},\mathbb{P})=L^{0}(%
\mathcal{F})$ (resp. $L^{0}(\mathcal{G})$ ) the space of $\mathcal{F}$
(resp. $\mathcal{G}$) measurable random variables that are $\mathbb{P}$ a.s.
finite, whereas by $\bar{L}^{0}(\mathcal{F})$ the space of extended random
variables which may take values in $\mathbb{R}\cup \{\infty \}$. In general
since $(\Omega ,\mathbb{P})$ are fixed we will always omit them. We define $%
L_{+}^{0}(\mathcal{F})=\{X\in L^{0}(\mathcal{F})\mid X\geq 0\}$ and $%
L_{++}^{0}(\mathcal{F})=\{X\in L^{0}(\mathcal{F})\mid X>0\}$. We remind that
all equalities/inequalities among random variables are meant to hold $%
\mathbb{P}$-a.s.. As the expected value $E_{\mathbb{P}}[\cdot ]$ is mostly
computed w.r.t. the reference probability $\mathbb{P}$, we will often omit $%
\mathbb{P}$ in the notation. \newline
Moreover the essential ($\mathbb{P}$ almost surely) \emph{supremum} $%
ess\sup_{\lambda }(X_{\lambda })$ of an arbitrary family of random variables 
$X_{\lambda }\in L^{0}(\Omega ,\mathcal{F},\mathbb{P})$ will be simply
denoted by $\sup_{\lambda }(X_{\lambda })$, and similarly for the essential 
\emph{infimum}. The symbol $\vee $ (resp. $\wedge $) denotes the essential ($%
\mathbb{P}$ almost surely) \emph{maximum} (resp. the essential \emph{minimum}%
) between two random variables, which are the usual lattice operations.

\bigskip

\paragraph{On $L^0$ modules of the $L^p$ type.} We now introduce the structure of normed module of the $L^p$ type which play a key role in the financial applications and are studied in detail in \cite{KV09} Section 4.2.

\bigskip

Consider the generalized conditional expectation of $%
\mathcal{F}$-measurable non negative random variables: $E[\cdot |\mathcal{G}%
]:L_{+}^{0}(\mathcal{F})\rightarrow \bar{L}_{+}^{0}(\mathcal{G})$ 
\begin{equation*}
E[X|\mathcal{G}]=:\lim_{n\rightarrow +\infty }E[X\wedge n|\mathcal{G}].
\end{equation*}%
The basic properties of conditional expectation still hold true. In
particular for every $X,X_{1},X_{2}\in L_{+}^{0}(\mathcal{F})$ and $Y\in
L^{0}(\mathcal{G})$

\begin{description}
\item[(i)] $YE[X|\mathcal{G}]=E[YX|\mathcal{G}]$;

\item[(ii)] $E[X_{1}+X_{2}|\mathcal{G}]=E[X_{1}|\mathcal{G}]+E[X_{2}|\mathcal{%
G}]$;

\item[(iii)] $E[X]=E[E[X|\mathcal{G}]]$.
\end{description}

$L^{0}(\mathcal{G})$ equipped with the order of the almost sure dominance is
a lattice ordered ring.  Let $p\in \lbrack 1,\infty ]$ and consider the algebraic module over the ring $L^{0}(\G)$ defined as 
\begin{equation*}
L_{\mathcal{G}}^{p}(\mathcal{F})=:\{X\in L^{0}(\Omega ,\mathcal{F},\mathbb{P}%
)\mid \Vert X|\mathcal{G}\Vert _{p}\in L^{0}(\Omega ,\mathcal{G},\mathbb{P}%
)\}
\end{equation*}%
where $\Vert \cdot |\mathcal{G}\Vert _{p}$ is 
assigned by 
\begin{equation}
\Vert X|\mathcal{G}\Vert _{p}=:\left\{ 
\begin{array}{cc}
E[|X|^{p}|\mathcal{G}]^{\frac{1}{p}} & \text{ if }p<+\infty \\ 
\inf \{Y\in \bar{L}^{0}(\mathcal{G})\mid Y\geq |X|\} & \text{ if }p=+\infty%
\end{array}%
\right.  \label{norm}
\end{equation}
By this definition $L_{\mathcal{G}}^{p}(\mathcal{F})$ inherits the product structure i.e. 
\begin{equation*}
L_{\mathcal{G}}^{p}(\mathcal{F})=L^{0}(\mathcal{G})L^{p}(\mathcal{F}%
)=\{YX\mid Y\in L^{0}(\mathcal{G}),\;X\in L^{p}(\mathcal{F})\}.
\end{equation*}
This last property allows the conditional expectation to be well defined for
every $\widetilde{X}\in L_{\mathcal{G}}^{p}(\mathcal{F})$; indeed, if $%
\widetilde{X}=YX$ with $Y\in L^{0}(\mathcal{G})$ and $X\in L^{p}(\mathcal{F}%
),$ then $E[\widetilde{X}|\mathcal{G}]=YE[X|\mathcal{G}]$ is a finite valued
random variable.
Moreover $\Vert \cdot |\mathcal{G}\Vert _{p}$ is a  $L^{0}(\mathcal{G})$-norm according to the following definition.
\begin{definition}
A map $\|\cdot\|: L^p_{\G}(\F) \rightarrow L^0_+$ is a $L^0$-norm on $L^p_{\G}(\F)$ if

\begin{description}
\item[(i)] $\|\gamma X\|=|\gamma|\|X\|$ for all $\gamma\in L^0$ and $X\in L^p_{\G}(\F)$,

\item[(ii)] $\|X_1+X_2\|\leq \|X_1\|+\|X_2\|$ for all $X_1,X_2\in L^p_{\G}(\F)$.

\item[(iii)] $\Vert X \Vert =0$ implies $X=0$.

\end{description}
\end{definition}

If we endow $L^0(\G)$ with a topology $\tau_0$ we may automatically induce a topology $\tau$ on $L^p_{\G}(\F)$ by 
$$ X_{\alpha}\overset{\tau}{\rightarrow} X\quad \text{ if and only if } \quad \|X_{\alpha}-X\|_p \overset{\tau_0}{\rightarrow} 0 $$
Two natural choices for $\tau_0$ are the topology of the convergence in probability (as used in \cite{Guo}) or the uniform topology as introduced in  \cite{FKV09}. In the following Remark we recall the second one since is non-standard in the literature.

\begin{remark}\label{uniform} For every $\varepsilon \in L_{++}^{0}(\mathcal{G}),$
the ball $B_{\varepsilon }:=\{Y\in L^{0}({\mathcal{G}})\mid |Y|\leq
\varepsilon \}$ centered in $0\in L^{0}(\mathcal{G})$ gives the neighborhood
basis of $0$. A set $V\subset L^{0}(\mathcal{G})$ is a neighborhood of $Y\in
L^{0}(\mathcal{G})$ if there exists $\varepsilon \in L_{++}^{0}(\mathcal{G})$
such that $Y+B_{\varepsilon }\subset V$. A set $V$ is open if it is a
neighborhood of all $Y\in V$. $(L^{0}(\mathcal{G}),|\cdot |)$ stands for $%
L^{0}(\mathcal{G})$ endowed with this topology: in this case the space
looses the property of being a topological vector space. It is easy to see
that a net converges in this topology, namely $Y_{N}\overset{|\cdot |}{%
\rightarrow }Y$ if for every $\varepsilon \in L_{++}^{0}(\mathcal{G})$ there
exists $\overline{N}$ such that $|Y-Y_{N}|<\varepsilon $ for every $N>%
\overline{N}$.
\end{remark}

Given the pair $(L^p_{\G}(\F),\tau)$,  $(L^0(\G),\tau_0)$ the dual module of $(L^p_{\G}(\F))^{\ast}$
will be the collection of continuous functional $\mu :(L_{\mathcal{G}}^{p}(
\mathcal{F}),\tau)\rightarrow (L^{0}(\mathcal{G}),\tau_0)$ which are $L^0(\G)$-linear i.e.

$$\mu(\alpha X+ \beta Y)= \alpha \mu(X)+\beta \mu(Y)$$
for every $\alpha,\beta\in L^0(\G)$ and $X,Y\in L^p_{\G}(\F)$.
\\From now on we will consider on $L^0(\G)$ any topology $\tau_0$ such that the dual module can be identified with a random
variable 
$$Z\in L_{\mathcal{G}}^{q}(\mathcal{F})  \rightleftarrows  \mu (\cdot )=E[Z\cdot |\mathcal{G}]$$ 
where $\frac{1}{p}+\frac{1}{q}=1$. So in general we can
identify the dual modules of $L^p_{\G}(\F)$  with $L^q_{\mathcal{G} }(\mathcal{F} )$. This occurs in particular if $\tau_0$ is the one defined in Remark \ref{uniform} or  the topology of convergence in probability (see \cite{Guo} and \cite{FKV09}).

\paragraph{On conditionally evenly convex sets.} We recall that a subset $V$ of a locally convex topological vector space is \emph{evenly convex} if it is the intersection of a family of open half spaces, or equivalently, if every $X\notin V$ can be separated from $V$ by a continuous linear functional. Obviously an evenly convex set is necessarily convex and moreover the whole topological vector space is a trivial case of evenly convex set. In this subsection we recall the generalization of the concept of  evenly convexity as introduced in \cite{FM12}, which is tailor made for the conditional setting.  We refer the reader to \cite{FM12} for further details and motivations.

\begin{definition}  Let $\mathcal{C}$ be a subset of $L^p_{\G}(\F)$. 
\begin{description}
\item[(CSet)] $\mathcal{C}$ has the countable
concatenation property if for every countable partition $\{A_{n}\}_{n}%
\subseteq \mathcal{G}$ and for every countable collection of elements $%
\{X_n\}_n\subset \mathcal{C}$ we have $\sum_{n}\mathbf{1}_{A_{n}}X_n\in 
\mathcal{C}$.
\end{description}
We denote by $\mathcal{C}^{cc}$ the countable
concatenation hull of $\mathcal{C}$, namely the smallest set $\mathcal{C}%
^{cc}\supset \mathcal{C}$ which satisfies (CSet).
\end{definition}

We notice that an arbitrary set $\mathcal{C}\subset L_{\mathcal{G}}^{p}(\mathcal{F}) $ may present some components which degenerate to
the entire module. Basically it might occur that for some $A\in \mathcal{G}$
, $\mathcal{C}\mathbf{1}_{A}=L_{\mathcal{G}}^{p}(\mathcal{F}) \mathbf{1}_{A}$, i.e., for each $\xi \in L_{\mathcal{G}}^{p}(\mathcal{F}) $
there exists $\eta \in \mathcal{C}$ such that $\eta \mathbf{1}_{A}=\xi 
\mathbf{1}_{A}$. In this case there are no chances to guarantee a separation on the set $\Omega$ as for the results given in \cite{FKV09}. Thus we need to determine the maximal $\mathcal{G}$-measurable set on which $\mathcal{C}$ reduces to $L_{\mathcal{G}}^{p}(\mathcal{F}) $. The existence of the maximal element has been proved in \cite{FM12} (see Remark \ref{remMAX} in the Appendix of the present paper) and the following
definition is well posed.

\begin{notation}
\label{trivialcomponent}Fix a set $\mathcal{C}\subseteq L_{\mathcal{G}}^{p}(\mathcal{F}) $. As the class $%
\mathcal{A}(\mathcal{C}):=\{A\in \mathcal{G}\mid \mathcal{C}\mathbf{1}_{A}=L_{\mathcal{G}}^{p}(\mathcal{F}) 
\mathbf{1}_{A}\}$ is closed with respect to countable union, we denote with $A_{\mathcal{C}}$ the $\mathcal{G}$-measurable maximal element of the class $\mathcal{A}(\mathcal{C})$ and with $D_{\mathcal{C}}$ the (P-a.s. unique) complement of $A_{\mathcal{C}}$ (see also the Remark \ref{remMAX}). Hence $\mathcal{C}\mathbf{1}_{A_{\mathcal{C}}}=L_{\mathcal{G}}^{p}(\mathcal{F}) \mathbf{1}_{A_{\mathcal{C}}}.$
\end{notation}

\begin{definition} Let $\mathcal{C}$ be a subset of $L^p_{\G}(\F)$. We will say that
 \begin{description}
\item[(i)]  $X \in L^p_{\G}(\F)$ \text{ is outside } $\mathcal{C}$ if $\mathbf{1}_{A}\{X\}\cap \mathbf{1}_{A}\mathcal{C} =\varnothing$  for every  $A\in 
\mathcal{G}$  with  $A\subseteq D_{\mathcal{C}}$  and $\mathbb{P}(A)>0$.
\item[(ii)]  $\mathcal{C} $ is  $L^0$-\emph{convex} if $\Lambda X_1+(1-\Lambda)X_2\in \mathcal{C}$ for every $X_1,X_2\in\mathcal{C}$, $\Lambda\in L^0(\G)$ and $0\leq Lambda \leq 1$.
 \item[(iii)]  $\mathcal{C} $ is \emph{conditional evenly convex} if $\mathcal{C} $ satisfies (CSet) and for every $X$ outside $\mathcal{C}$ there exists a $\mu \in L_{\mathcal{G}}^{q}(\mathcal{F}) $ such that 
\begin{equation*}
\mu (X)>\mu (\xi )\text{ on } D_{\mathcal{C}},\;\forall \,\xi \in \mathcal{C}%
.
\end{equation*}
\end{description}

\end{definition}

In \cite{FM12} it is showed that any conditional evenly convex set is also $L^0$-convex and it can be characterized as intersection of half spaces.

\section{Conditional Risk Measures defined on $L_{%
\mathcal{G}}^{p}(\mathcal{F})$.}\label{CondRM}  

In this section we summarize the properties of the risk maps that we will consider  and recall the Penot-Volle type dual representation of quasiconvex conditional maps as proved in \cite{FM12}. 

\begin{definition}
\label{defdef1}A map $\rho : L^p_{\G}(\F)\rightarrow \bar{L}^{0}(\mathcal{G})$ is
\begin{description}
\item[(REG)] regular if for every $X_{1},X_{2}\in L^p_{\G}(\F)$ and $A\in \mathcal{G}$, 
\begin{equation*}
\rho (X_{1}\mathbf{1}_{A}+X_{2}\mathbf{1}_{A^{C}})=\rho (X_{1})\mathbf{1}%
_{A}+\rho (X_{2})\mathbf{1}_{A^{C}}.
\end{equation*}
\end{description}
\end{definition}

\begin{remark}
\label{remB} It is well known that (REG) is equivalent to:%
\begin{equation*}
\rho (X\mathbf{1}_{A})\mathbf{1}_{A}=\rho (X)\mathbf{1}_{A}\text{, }\forall
A\in \mathcal{G}\text{, }\forall X\in L^p_{\G}(\F).
\end{equation*}%
In the module setting it is even true that (REG) is equivalent to countably
regularity, i.e. 
\begin{equation*}
\rho (\sum_{i=1}^{\infty }X_{i}\mathbf{1}_{A_{i}})=\sum_{i=1}^{\infty }\rho
(X_{i})\mathbf{1}_{A_{i}}\text{ on \ }\cup _{i=1}^{\infty }A_{i}
\end{equation*}
for $X_{i}\in L^p_{\G}(\F)$ and $\left\{ A_{i}\right\} _{i}$  a sequence of disjoint $\mathcal{G}$ measurable sets. Indeed, from the module properties, $X:=\sum_{i=1}^{\infty }X_{i}\mathbf{1}_{A_{i}}\in L^p_{\G}(\F)$ and $\sum_{i=1}^{\infty}\rho (X_{i})\mathbf{1}_{A_{i}}\in \bar{L}^{0}(\mathcal{G})$; (REG) then implies $\rho (X)\mathbf{1}_{A_{i}}=\rho (X\mathbf{1}_{A_{i}})\mathbf{1}_{A_{i}}=\rho (X_{i}\mathbf{1}_{A_{i}})\mathbf{1}_{A_{i}}=\rho (X_{i})\mathbf{1%
}_{A_{i}}$.

\end{remark}

Let $\rho : L^p_{\G}(\F)\rightarrow \bar{L}^{0}(\mathcal{G})$ be (REG). There might exist
a set $A\in \mathcal{G}$ on which the map $\rho$ is infinite, in the sense
that $\rho (\xi )\mathbf{1}_{A}=+\infty\mathbf{1}_{A}$ for every $\xi  \in L^p_{\G}(\F)$. For this reason we introduce 
\begin{equation*}
\mathcal{A}:=\{A\in \mathcal{G}\mid \rho (\xi )\mathbf{1}_{A}=+\infty\mathbf{1%
}_{A}\;\forall \,\xi \in L^p_{\G}(\F)\}.
\end{equation*}
Applying Lemma \ref{L10} in Appendix with $F:=\left\{ \rho (\xi )\mid \xi \in
L^p_{\G}(\F)\right\} $ and $Y_{0}=+\infty$ we can deduce the existence of two maximal
sets $T_{\rho }\in \mathcal{G}$ and $\Upsilon _{\rho }\in \mathcal{G}$ for
which $P(T_{\rho }\cap \Upsilon _{\rho })=0$, $P(T_{\rho }\cup \Upsilon _{\rho
})=1$ and 
\begin{eqnarray}
\rho (\xi )=+\infty &\text{ on }\Upsilon _{\rho }&\text{for every }\xi ,\eta
\in L^p_{\G}(\F),  \notag \\
\rho (\zeta)<+\infty &\text{ on }T_{\rho }&\text{ for some }\zeta\in L^p_{\G}(\F).
\label{888}
\end{eqnarray}%
Suppose that a map $\rho$ satisfies: $\mathbb{P}(\Upsilon _{\rho })>0$ so
that $\rho (\xi )\mathbf{1}_{\Upsilon _{\rho }}=+\infty \mathbf{1}_{\Upsilon
_{\rho }} $ for every $\xi \in L^p_{\G}(\F)$. Then its lower level sets $\{X\in L^p_{\G}(\F)\mid \rho (X)\leq \eta \}$, $\eta \in L^{0}(\mathcal{G})$, are all empty (and so convex and closed). This would imply that any such map is quasiconvex and lower semicontinuous, regardless of its behavior on (the relevant set) $T_{\rho }$. This explains the need of introducing the set $T_{\rho }$ in the
following definition of (QCO), (LSC) and (EVQ). \newline
Hereafter we state the conditional version of some relevant properties of
the maps under investigation. To this aim we define, for $Y\in L^{0}(\mathcal{G}),
$$$U_{\rho }^{Y}:=\{\xi \in L^p_{\G}(\F)\mid \rho (\xi )\mathbf{1}_{T_{\rho }}\leq Y\}.$$

\begin{definition}
\label{defdef}Let $\rho :L^p_{\G}(\F)\rightarrow \bar{L}^{0}(\mathcal{G})$. The map $\rho $ is:

\begin{description}
\item[(QCO)] quasiconvex if the sets $U_{\rho }^{Y}$ are $L^{0}(\mathcal{G})$%
-convex $\forall Y\in L^{0}(\mathcal{G})$.

\item[(EVQ)] evenly quasiconvex if the sets $U_{\rho }^{Y}$ are conditional evenly convex $\forall Y\in L^{0}(\mathcal{G})$.

\item[(LSC)] lower semicontinuous if the sets $U_{\rho
}^{Y}$ are closed $\forall Y\in L^{0}(\mathcal{G}).$

\end{description}
\end{definition}

\begin{remark}
\label{remA}Let $\rho :L^p_{\G}(\F)\rightarrow \bar{L}^{0}(\mathcal{G})$.
\begin{description}
\item[(i)] The quasiconvexity of $\rho $ is equivalent to the condition 
\begin{equation}
\rho (\Lambda X_{1}+(1-\Lambda )X_{2})\leq \rho (X_{1})\vee \rho (X_{2}),
\label{max}
\end{equation}%
for every $X_{1},X_{2}\in L^p_{\G}(\F)$, $\Lambda \in L^{0}(\mathcal{G})$ and $0\leq\Lambda \leq 1$.
\item[(ii)] If the map $\rho$ is (REG) then $U_{\rho}^{Y}$ satisfies (CSet).
\end{description}
\end{remark}

Regularity also guarantees that evenly quasiconvex maps are indeed
quasiconvex. Moreover the next Lemma shows that the property (EVQ) is weaker than (QCO) plus (LSC)  (see \cite{FM12} Section 5 for further details).

\begin{lemma}
\label{remconv}Let $\rho :L^p_{\G}(\F)\rightarrow \bar{L}^{0}(\mathcal{G})$ be (REG).
\begin{description}
\item[(i)] If $\rho $ is (EVQ) then it is (QCO).
\item[(ii)] If $\rho$ is (QCO) and (LSC)  then it is (EVQ).
\end{description}
\end{lemma}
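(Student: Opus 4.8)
The plan is to prove the two implications separately, in each case reducing the conditional statement on the relevant set $T_\rho$ to a manipulation of the level sets $U_\rho^Y$.

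For part (i), assume $\rho$ is (REG) and (EVQ). By Remark \ref{remA}(i), to show (QCO) it suffices to establish the inequality $\rho(\Lambda X_1+(1-\Lambda)X_2)\leq \rho(X_1)\vee\rho(X_2)$ on $T_\rho$ for all $X_1,X_2\in L^p_{\G}(\F)$ and $\Lambda\in L^0(\G)$ with $0\leq\Lambda\leq 1$; on $\Upsilon_\rho$ the inequality is trivially an equality $+\infty=+\infty$. Set $Y:=\rho(X_1)\vee\rho(X_2)$. If $Y\in L^0(\G)$ (finite on all of $T_\rho$, say) then $X_1,X_2\in U_\rho^Y$; since (EVQ) forces $U_\rho^Y$ to be conditionally evenly convex, hence $L^0$-convex by the result of \cite{FM12} recalled just before Section \ref{CondRM}, we get $\Lambda X_1+(1-\Lambda)X_2\in U_\rho^Y$, i.e. $\rho(\Lambda X_1+(1-\Lambda)X_2)\mathbf{1}_{T_\rho}\leq Y$, which is exactly \eqref{max} on $T_\rho$. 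The genuine technical point is that $Y$ need not be finite: one handles this by a countable-concatenation/truncation argument. Partition $T_\rho$ into the $\mathcal{G}$-measurable sets $A_n:=\{n-1\leq Y< n\}$ (together with $\{Y=+\infty\}$, which one shows is $\mathbb{P}$-null on $T_\rho$ up to the usual care, or on which both sides are $+\infty$), work on each $A_n$ with the finite bound $Y\wedge n$, use that $U_\rho^{Y\wedge n}$ is $L^0$-convex, and then reassemble via (REG) together with Remark \ref{remA}(ii) (the level sets satisfy (CSet)). I expect this bookkeeping with $\pm\infty$ and the sets $T_\rho,\Upsilon_\rho$ to be the main obstacle; the convexity core is immediate.

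For part (ii), assume $\rho$ is (REG), (QCO) and (LSC). Fix $Y\in L^0(\G)$; we must show $U_\rho^Y$ is conditionally evenly convex. By (QCO) the set $U_\rho^Y$ is $L^0$-convex, and by (LSC) it is closed; by Remark \ref{remA}(ii) it satisfies (CSet). The claim is then that a closed $L^0$-convex set satisfying (CSet) is automatically conditional evenly convex, and this is precisely the conditional Hahn--Banach/separation machinery of \cite{FM12}: every $X$ outside $U_\rho^Y$ can be strictly separated on $D_{U_\rho^Y}$ from $U_\rho^Y$ by some $\mu\in L^q_{\G}(\F)$, which is the defining property of conditional evenly convexity. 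One should invoke here the identification of the dual module with $L^q_{\G}(\F)$ discussed after Remark \ref{uniform}, so that the separating continuous $L^0$-linear functional is represented by an element of $L^q_{\G}(\F)$. The main thing to be careful about is the role of the trivial component $A_{U_\rho^Y}$: on that set $U_\rho^Y$ fills the whole module and no separation is possible, which is exactly why the definition of ``outside'' and of conditional evenly convexity only demand separation on $D_{\mathcal{C}}$; so the argument must be phrased on $D_{U_\rho^Y}$ throughout. Modulo correctly quoting the separation theorem of \cite{FM12}, part (ii) is then essentially a one-line deduction.

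Overall, the proof is short: (i) is Remark \ref{remA}(i) plus $L^0$-convexity of the (EVQ) level sets, with a truncation argument to deal with non-finite $Y$; (ii) is the conditional separation theorem from \cite{FM12} applied to the closed $L^0$-convex level sets. The one real subtlety in both parts is the careful handling of the sets $T_\rho$, $\Upsilon_\rho$ and the trivial component $A_{U_\rho^Y}$ so that all inequalities are asserted on the correct $\mathcal{G}$-measurable set.
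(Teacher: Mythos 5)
The paper itself contains no proof of this lemma --- it is imported from \cite{FM12} (the text only says ``see \cite{FM12} Section 5''), so the comparison is against the intended argument rather than a written one; on its own merits your proposal is correct in substance and follows that intended route. For (ii) you do exactly what is meant: $U_{\rho}^{Y}$ is $L^{0}(\mathcal{G})$-convex by (QCO), closed by (LSC), satisfies (CSet) by (REG) (Remark \ref{remA}(ii)), and the conditional separation result of \cite{FM12} (a closed $L^{0}$-convex set with (CSet) is conditionally evenly convex, with separation asserted only on $D_{\mathcal{C}}$ and the separating functional identified with an element of $L^{q}_{\G}(\F)$) then gives (EVQ); your care about the trivial component and the dual identification is the right care to take. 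For (i), however, your detour through the inequality (\ref{max}) and the truncation over $\{n-1\leq Y<n\}$ is unnecessary: in this paper (QCO) is \emph{defined} as $L^{0}(\mathcal{G})$-convexity of the very sets $U_{\rho}^{Y}$, $Y\in L^{0}(\mathcal{G})$, and the end of Section \ref{prel} records that every conditionally evenly convex set is $L^{0}$-convex, so (EVQ)$\Rightarrow$(QCO) is a one-line consequence of the definitions with no infinite levels to handle. If you nevertheless want the pointwise inequality, your sketch has two loose ends: $X_{1},X_{2}$ need not lie in $U_{\rho}^{Y\wedge n}$ (their values off $A_{n}$ are uncontrolled), so before invoking $L^{0}$-convexity you must replace them by $X_{i}\mathbf{1}_{A_{n}}+\zeta\mathbf{1}_{A_{n}^{c}}$ with $\zeta$ a random variable satisfying $\rho(\zeta)<+\infty$ on $T_{\rho}$, choose the level adaptively (e.g. $(Y\wedge n)\mathbf{1}_{A_{n}}+\rho(\zeta)\mathbf{1}_{A_{n}^{c}\cap T_{\rho}}$, extended arbitrarily on $\Upsilon_{\rho}$), and then recover the inequality on $A_{n}$ via (REG); and the partition $\{n-1\leq Y<n\}$ as written only covers $\{Y\geq 0\}$, although since $\bar{L}^{0}(\mathcal{G})$ excludes the value $-\infty$ one has $Y\wedge n\in L^{0}(\mathcal{G})$ in any case, so only the membership issue is substantive. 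Also, your parenthetical suggestion that $\{Y=+\infty\}$ is $\mathbb{P}$-null on $T_{\rho}$ is not true in general (on $T_{\rho}$ only \emph{some} $\zeta$ has finite risk); the correct disposal of that set is simply that the right-hand side of (\ref{max}) equals $+\infty$ there, as your hedge indicates.
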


\begin{theorem}[\cite{FM12} Theorem 16]
\label{EvQco}If $\rho :L^p_{\G}(\F)\rightarrow \bar{L}^{0}(\mathcal{G})$ is (REG) and
(EVQ) then 
\begin{equation}
\rho (X)=\sup_{\mu \in L^q_{\G}(\F)}\mathcal{R}(\mu (X),\mu ),  \label{rapprPiSup}
\end{equation}%
where for $Y\in L^{0}(\mathcal{G})$ and $\mu \in L^q_{\G}(\F)$, 
\begin{equation}
\mathcal{R}(Y,\mu ):=\inf_{\xi \in L^p_{\G}(\F)}\left\{ \rho (\xi )\mid \mu (\xi )\geq
Y\right\} .  \label{1212}
\end{equation}
\end{theorem}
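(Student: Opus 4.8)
The plan is to mimic the classical Penot--Volle argument for quasiconvex maps, carried out conditionally on the ``finite'' component $T_{\rho}$ and then extended to $\Upsilon_{\rho}$ by regularity. First I would establish the easy inequality: for every $\mu\in L^q_{\G}(\F)$ we have $\mathcal{R}(\mu(X),\mu)\leq\rho(X)$, since $X$ itself is admissible in the infimum defining $\mathcal{R}(\mu(X),\mu)$; taking the supremum over $\mu$ gives $\sup_{\mu}\mathcal{R}(\mu(X),\mu)\leq\rho(X)$. This step is routine and holds for an arbitrary map, as already remarked in the introduction in connection with \eqref{ddd}.

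\textbf{The reverse inequality.} The core of the argument is to show $\rho(X)\leq\sup_{\mu}\mathcal{R}(\mu(X),\mu)$ on $T_{\rho}$. Fix $X$ and argue by contradiction on a set of positive probability: suppose there is $A\in\mathcal{G}$, $A\subseteq T_{\rho}$, $\mathbb{P}(A)>0$, and $Y\in L^0(\G)$ with $\sup_{\mu}\mathcal{R}(\mu(X),\mu)<Y<\rho(X)$ on $A$ (one should first reduce to the case where such a separating $Y$ exists on a positive-probability subset, using the definition of $T_{\rho}$ to ensure $\rho(X)$ is finite there; the general inequality then follows by a gluing/exhaustion argument over a countable partition, invoking (REG) and Remark \ref{remB}). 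Since $Y<\rho(X)$ on $A$, the point $X$ does not belong to the lower level set $U_{\rho}^{Y}$, and more precisely I would check that $\mathbf{1}_A X$ is \emph{outside} $U_{\rho}^{Y}\mathbf{1}_A$ in the sense of the definition — here one uses (REG) to transfer the strict inequality $\rho(X)>Y$ to every further $\mathcal{G}$-subset of $A$, together with the fact that $A\subseteq T_{\rho}$ sits inside the relevant set $D_{U_\rho^Y}$ after possibly shrinking $A$. By Remark \ref{remA}(ii) the set $U_{\rho}^{Y}$ satisfies (CSet), and by hypothesis (EVQ) it is conditionally evenly convex; hence there exists $\mu\in L^q_{\G}(\F)$ with $\mu(X)>\mu(\xi)$ on $A$ for all $\xi\in U_{\rho}^{Y}$, i.e. $\mu(X)>\sup_{\xi\in U_\rho^Y}\mu(\xi)$ on $A$. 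Consequently any $\xi$ with $\mu(\xi)\geq\mu(X)$ on $A$ must fail $\rho(\xi)\mathbf{1}_{T_\rho}\leq Y$, which forces $\mathcal{R}(\mu(X),\mu)\geq Y$ on $A$ — contradicting $\mathcal{R}(\mu(X),\mu)\leq\sup_{\nu}\mathcal{R}(\nu(X),\nu)<Y$ on $A$.

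\textbf{The component $\Upsilon_{\rho}$ and bookkeeping.} On $\Upsilon_{\rho}$ the identity \eqref{rapprPiSup} is essentially vacuous: there $\rho(\xi)=+\infty$ for all $\xi$, so the lower level sets are empty, every $\mathcal{R}(\mu(X),\mu)$ is an infimum over the empty set and thus equals $+\infty$ on $\Upsilon_{\rho}$ (at least for one suitably chosen $\mu$), matching $\rho(X)=+\infty$; I would phrase this carefully so that the supremum in \eqref{rapprPiSup} indeed attains $+\infty$ on $\Upsilon_\rho$, e.g. by noting that $\mu(\xi)\geq Y$ is unsatisfiable once $Y$ is large enough, or directly that the constraint set can be made empty. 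Finally, the $A$-by-$A$ inequalities are assembled into a single a.s.\ inequality on $\Omega$ by taking an essential supremum / exhausting $T_{\rho}$ with a countable partition and using (REG) to paste the chosen functionals $\mu$ together via (CSet)-type concatenation.

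\textbf{Main obstacle.} I expect the delicate point to be verifying that $\mathbf{1}_A X$ is genuinely \emph{outside} $U_{\rho}^{Y}$ in the conditional sense (item (i) of the definition), so that the conditional even convexity of $U_\rho^Y$ applies and yields a single functional $\mu$ separating $X$ from the \emph{whole} level set simultaneously on a fixed set of positive measure — rather than a family of functionals separating $X$ only locally. This is exactly where the module-theoretic machinery of \cite{FM12} (countable concatenation, the maximal set $A_{\mathcal{C}}$, Lemma \ref{L10}) is needed, and where the conditional argument genuinely departs from the scalar Penot--Volle proof; the rest is careful localization via (REG).
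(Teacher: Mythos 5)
Your first inequality and the treatment of $\Upsilon_{\rho}$ are fine (up to a small slip: on $\Upsilon_\rho$ the constraint set $\{\xi\mid\mu(\xi)\geq\mu(X)\}$ is not empty — it contains $X$ — but $\mathcal{R}=+\infty$ there anyway because every $\rho(\xi)$ equals $+\infty$ on $\Upsilon_\rho$). The genuine gap is exactly the point you flag and do not resolve: in your contradiction step you need $X$ (or $\mathbf{1}_A X$) to be \emph{outside} $U_\rho^Y$ in the sense of the definition, i.e.\ $\mathbf{1}_B\{X\}\cap\mathbf{1}_B U_\rho^Y=\varnothing$ for \emph{every} positive-probability $B\subseteq D_{U_\rho^Y}$ — not only for $B\subseteq A$. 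This fails in general: if $\rho(X)\leq Y$ on some positive-probability $B\subseteq D_{U_\rho^Y}\setminus A$ (a generic situation, since $Y>\sup_\mu\mathcal{R}$ only on $A$), then pasting $\xi:=X\mathbf{1}_B+\xi_0\mathbf{1}_{B^c}$ with any $\xi_0\in U_\rho^Y$ produces, by (REG), an element of $U_\rho^Y$ agreeing with $X$ on $B$, so $X$ is not outside; replacing $X$ by $\mathbf{1}_A X$ does not help (take $B$ where $0$ can be pasted into the level set), and ``shrinking $A$'' does not help either, because $D_{U_\rho^Y}$ is determined by the set $U_\rho^Y$, not by $A$. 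Nor can you paste locally obtained functionals via (CSet): the separation property is only guaranteed against the whole level set on its own non-degenerate part, so there is no supply of local separators to concatenate. Without outsideness, (EVQ) gives you no $\mu$ at all, and the contradiction never gets off the ground.

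The paper's argument (imported from \cite{FM12}; its mechanism is reproduced in the ONLY IF part of the proof of Theorem \ref{RM}) avoids fixing a local intermediate level altogether. It first isolates, via Lemma \ref{L10}, the maximal set where $\rho$ is constant and the set $\Upsilon_\rho$, and then builds a \emph{global random} level $Y_\varepsilon$ as in \eqref{Yeps}: equal to $\rho(X)-\varepsilon$ where $\rho(X)$ is finite and $\rho$ non-degenerate, to $\varepsilon$ where $\rho(X)=+\infty$, and to $\rho(X)$, resp.\ $0$, on the constancy set and on $\Upsilon_\rho$. With this choice $X$ is outside $\mathcal{C}_\varepsilon=\{\xi\mid\rho(\xi)\mathbf{1}_{T_\rho}\leq Y_\varepsilon\}$ on all of $D_{\mathcal{C}_\varepsilon}$, so a \emph{single} application of conditional even convexity yields one $\mu_\varepsilon$, which gives directly
\begin{equation*}
\rho(X)\;\geq\;\mathcal{R}\bigl(\mu_\varepsilon(X),\mu_\varepsilon\bigr)\;\geq\;(\rho(X)-\varepsilon)\mathbf{1}_{G}+\varepsilon\mathbf{1}_{G^{C}},\qquad G=\{\rho(X)<+\infty\},
\end{equation*}
and the representation follows by taking the supremum over $\varepsilon\in L^0_{++}(\mathcal{G})$ and $\mu$ (no contradiction, no gluing of functionals); the constancy component is handled separately because there $\mathcal{C}_\varepsilon$ degenerates and no separation is possible. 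To repair your proof you would have to replace your local $Y$ on $A$ by precisely such a globally constructed $Y_\varepsilon$, at which point you recover the paper's argument.
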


\begin{definition}
We say that a map $\rho :L_{\mathcal{G}}^{p}(\mathcal{F})\rightarrow \bar{L}%
^{0}(\mathcal{G})$ is:
\end{definition}

\begin{description}
\item[($\downarrow $ MON)] \textit{monotone decreasing if} $X_{1}\geq
X_{2}\Rightarrow \rho (X_{1})\leq \rho (X_{2})$.
\end{description}
\begin{definition}
A \emph{quasiconvex conditional risk measure} is
a map $\rho :L_{\mathcal{G}}^{p}(\mathcal{F})\rightarrow \bar{L}^{0}(%
\mathcal{G})$ satisfying (REG), ($\downarrow $MON) and (QCO).
\end{definition}

Recall that the principle of diversification states that \textquotedblleft
diversification should not increase the risk\textquotedblright , i.e. the
diversified position $\Lambda X+(1-\Lambda )Y$ is less risky than either the
positions $X$ or $Y$. Thus the mathematical formulation of this priciple is
exactly quasiconvexity, i.e. the property (\ref{max}). Under the cash
additivity axiom

\begin{description}
\item[(CAS)] $\rho (X+\Lambda )=\rho (X)-\Lambda $, for any $\Lambda \in
L^{0}(\mathcal{G}$) and $X\in L_{\mathcal{G}}^{p}(\mathcal{F})$,
\end{description}

\noindent convexity and quasiconvexity are equivalent, so that they both
provide the right interpretation of this principle. As vividly discussed by
El Karoui and Ravanelli \cite{ER09} the lack of liquidity of zero coupon
bonds is the primary reason of the failure of cash additivity. In addition,
in the \emph{time consistent} case, the cash subadditivity property

\begin{description}
\item[(CSA)] $\rho (X+\Lambda )\geq \rho (X)-\Lambda $, for any $\Lambda \in
L_{+}^{0}(\mathcal{G}$) and $X\in L_{\mathcal{G}}^{p}(\mathcal{F})$,
\end{description}

\noindent is the adequate property of a conditional risk measure \emph{for
processes} (see the discussion in Section 5, \cite{FoPe}).

Thus it is unavoidable in the dynamic setting to relax the convexity axiom
to quasiconvexity (and (CAS) to (CSA)) in order to regain the best modeling
of diversification.

\subsection{Complete Duality}\label{CompSect}

This section is devoted to main result of this paper: a complete
quasiconvex conditional duality between the risk measure $\rho $ and the dual map $R$. Given any $\rho$ we guarantee the existence of a unique map $R$ in the class $\mathcal{M}(L^{0}(\mathcal{G})\times \mathcal{P}^{q})$ which allows a dual representation of the form given by equation (\ref{rapprRM}). On the other hand every $R\in \mathcal{M}(L^{0}(\mathcal{G})\times \mathcal{P}^{q})$ will identify a unique quasiconvex conditional risk measure by the mean of the representation  (\ref{rapprRM}). 

\bigskip

As discussed in the previous section, the duality concerning conditional
quasiconvex risk measures holds only under an additional continuity
assumption (either (EVQ) or (LSC)). For the analysis of the
complete duality in this section, we chose the weakest assumption, i.e. (EVQ),
and we leave the (LSC) case for further investigation. We stress that,
in virtue of Lemma \ref{remconv}, any map satisfying the assumptions of Theorem \ref{RM} is a conditional quasiconvex risk measure.

Due to the assumption that $\rho $ is monotone \textit{decreasing}, we
modify, with just a difference in the sign, the definition of the dual
function and rename it as: 

\begin{equation}
R(Y,Z):=\inf_{\xi \in L_{\mathcal{G}}^{p}(\mathcal{F})}\left\{ \rho (\xi
)\mid E\left[ -\xi Z|\mathcal{G}\right] \geq Y\right\} \text{.}  \label{RLP}
\end{equation}%
The function $R$ is well defined on the domain 
\begin{equation}
\Sigma =\{(Y,Z)\in L^{0}(\mathcal{G})\times L_{\mathcal{G}}^{q}(\mathcal{F}%
)\mid \exists \xi \in L_{\mathcal{G}}^{p}(\mathcal{F})\text{ s.t. }E[-Z\xi |%
\mathcal{G}]\geq Y\}.  \label{sigma}
\end{equation}%
Let also introduce the following set: 
\begin{equation*}
\mathcal{P}^{q}=:\{Z\in L_{\mathcal{G}}^{q}(\mathcal{F})\mid Z\geq 0,\;E[Z|%
\mathcal{G}]=1\}
\end{equation*}%
and with a slight abuse of notation we will write that the probability $Q\in 
\mathcal{P}^{q}$ instead of $Z=\frac{dQ}{d\mathbb{P}}\in \mathcal{P}^{q}$
and $R(Y,Q)$ instead of $R\left( Y,\frac{dQ}{d\mathbb{P}}\right) $. In addition for every $Q\in \mathcal{P}^q$ we have $E\left[\frac{dQ}{d\mathbb{P}}X|
\mathcal{G}\right]=E_Q\left[X|\mathcal{G}\right]$.

\begin{definition}
\label{KK}The class $\mathcal{M}(L^{0}(\mathcal{G})\times \mathcal{P}^{q})$
is composed by maps $K:L^{0}(\mathcal{G})\times \mathcal{P}^{q}\rightarrow 
\bar{L}^{0}(\mathcal{G})$ s.t.
\begin{description}
\item[(i)] $K$ is increasing in the first component.

\item[(ii)] $K(Y\mathbf{1}_{A},Q)\mathbf{1}_{A}=K(Y,Q)\mathbf{1}_{A}$ for
every $A\in \mathcal{G}$ and $(Y,\frac{dQ}{d\mathbb{P}})\in \Sigma $.

\item[(iii)] $\inf_{Y\in L^{0}(\mathcal{G})}K(Y,Q)=\inf_{Y\in L^{0}(\mathcal{G%
})}K(Y,Q^{\prime })$ for every $Q,Q^{\prime }\in \mathcal{P}^{q}$.

\item[(iv)] $K$ is $\diamond $-evenly $L^{0}(\mathcal{G})$-quasiconcave: for
every $(Y^{\ast },Q^{\ast })\in L^{0}(\mathcal{G})\times \mathcal{P}^{q}$, $%
A\in \mathcal{G}$ and $\alpha \in L^{0}(\mathcal{G})$ such that $K(Y^{\ast
},Q^{\ast })<\alpha $ on $A$, there exists $(S^{\ast },X^{\ast })\in
L_{++}^{0}(\mathcal{G})\times L_{\mathcal{G}}^{p}(\mathcal{F})$ with 
\begin{equation*}
Y^{\ast }S^{\ast }+E\left[ X^{\ast }\frac{dQ^{\ast }}{d\mathbb{P}}\mid 
\mathcal{G}\right] <YS^{\ast }+E\left[ X^{\ast }\frac{dQ}{d\mathbb{P}}\mid 
\mathcal{G}\right] \text{ on }A
\end{equation*}%
for every $(Y,Q)$ such that $K(Y,Q)\geq \alpha $ on $A$.

\item[(v)] the set $\mathcal{K}(X)=\left\{ K(E[X\frac{dQ}{d\mathbb{P}}|%
\mathcal{G}],Q)\mid Q\in \mathcal{P}^{q}\right\} $ is upward directed for
every $X\in L_{\mathcal{G}}^{p}(\mathcal{F})$.

\item[(vi)] $K(Y,Q_{1})\mathbf{1}_{A}=K(Y,Q_{2})\mathbf{1}_{A}$, if $\frac{%
dQ_{1}}{d\mathbb{P}}\mathbf{1}_{A}=\frac{dQ_{2}}{d\mathbb{P}}\mathbf{1}_{A},$
$Q_{i}\in \mathcal{P}^{q}$, and $A\in \mathcal{G}.$
\end{description}
\end{definition}

We will show in Lemma \ref{classM} that the class $\mathcal{M}(L^{0}(%
\mathcal{G})\times \mathcal{P}^{q})$ is not empty.

\begin{theorem}
\label{RM} The map $\rho :L_{\mathcal{G}}^{p}(\mathcal{F})\rightarrow L^{0}(\mathcal{G})$ satisfies (REG), ($\downarrow $MON), (EVQ) if and only if 
\begin{equation}
\rho (X)=\sup_{Q\in \mathcal{P}^{q}}R\left( E_Q\left[ -X|%
\mathcal{G}\right] ,Q\right)  \label{rapprRM}
\end{equation}%
where 
\begin{equation*}
R(Y,Q)=\inf_{\xi \in L_{\mathcal{G}}^{p}(\mathcal{F})}\left\{ \rho (\xi
)\mid E_Q\left[ -\xi |\mathcal{G}\right] =Y\right\}
\end{equation*}%
is unique in the class $\mathcal{M}(L^{0}(\mathcal{G})\times \mathcal{P}%
^{q}) $.
\end{theorem}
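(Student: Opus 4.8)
The plan is to split the theorem into two implications and to reduce the ``only if'' direction to the Penot--Volle type representation of Theorem~\ref{EvQco}. For the ``only if'' part, suppose $\rho$ satisfies (REG), ($\downarrow$MON) and (EVQ). By Lemma~\ref{remconv}(i) it is also (QCO), so Theorem~\ref{EvQco} gives $\rho(X)=\sup_{\mu\in L^q_{\G}(\F)}\mathcal{R}(\mu(X),\mu)$ with $\mathcal{R}(Y,\mu)=\inf\{\rho(\xi)\mid \mu(\xi)\geq Y\}$. The first step is to show that monotonicity lets us restrict the supremum from all $\mu\in L^q_{\G}(\F)$ (equivalently, all $Z\in L^q_{\G}(\F)$ via $\mu=E[Z\,\cdot\,|\G]$) to the sub-class $\mathcal{P}^q$ of normalized nonnegative densities: if $Z$ has a negative part on a set of positive measure, or if $E[Z|\G]$ is not (a.s.\ positive, hence after normalization) equal to $1$, then $\mathcal{R}(E[ZX|\G],Z)$ is dominated by the contribution of some $Z'\in\mathcal{P}^q$ (on the relevant $\G$-set), exactly as in the static argument of \cite{CMMMb}; here one uses that ($\downarrow$MON) forces $\mathcal{R}(Y,Z)=-\infty$ (or is otherwise negligible in the sup) unless $Z\geq 0$, and a scaling/normalization argument using the $L^0(\G)$-homogeneity of $\mu$ handles $E[Z|\G]$. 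The second step is to replace the one-sided constraint $E[-\xi Z|\G]\geq Y$ by the equality $E[-\xi Z|\G]=Y$: for $Z\in\mathcal{P}^q$ and $Y\in L^0(\G)$, monotonicity of $\rho$ implies $\inf\{\rho(\xi)\mid E[-\xi Z|\G]\geq Y\}=\inf\{\rho(\xi)\mid E[-\xi Z|\G]= Y\}$, since given $\xi$ with $E[-\xi Z|\G]=Y'\geq Y$ one can add to $\xi$ a suitable $\G$-measurable constant (note $E[(-\xi-c)Z|\G]=Y'-c$ using $E[Z|\G]=1$) to hit $Y$ exactly while only decreasing $\rho$. This turns $\mathcal{R}(E[-XZ|\G],Z)$ into $R(E_Q[-X|\G],Q)$ and yields \eqref{rapprRM}; the upward-directedness of $\mathcal{K}(X)$ (property (v)) is what legitimizes writing the essential supremum and is checked directly from (REG) by a pasting argument on $\G$-sets.

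For the ``if'' direction, assume $\rho$ is given by \eqref{rapprRM} for some $K\in\mathcal{M}(L^0(\G)\times\mathcal{P}^q)$ in place of $R$. One must verify (REG), ($\downarrow$MON) and (EVQ). ($\downarrow$MON) is immediate since $X\mapsto E_Q[-X|\G]$ is monotone decreasing and $K$ is increasing in its first argument (property (i)). (REG) follows from properties (ii) and (vi) together with (REG)-type behavior of conditional expectation: on a set $A\in\G$, $E_Q[-(X_1\mathbf 1_A+X_2\mathbf 1_{A^c})|\G]\mathbf 1_A=E_Q[-X_1|\G]\mathbf 1_A$, and (ii) lets one pull the $\mathbf 1_A$ through $K$; property (v) and the pasting argument again give that the sup localizes correctly. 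The key remaining point is (EVQ): one shows each lower level set $U_\rho^Y=\{X\mid \rho(X)\leq Y\}$ is conditionally evenly convex. This is exactly where property (iv), the $\diamond$-evenly $L^0(\G)$-quasiconcavity of $K$, is designed to be used: if $X$ is outside $U_\rho^Y$, then on $D_{U_\rho^Y}$ there is some $Q^*$ and $\alpha$ with $\rho(X)\not\leq Y$ witnessed by $K(E_{Q^*}[-X|\G],Q^*)\not\leq Y$ on a positive-measure $\G$-set, and (iv) produces a pair $(S^*,X^*)$ giving a strict separating inequality $Y^*S^*+E[X^*\frac{dQ^*}{d\P}|\G] < YS^* + E[X^*\frac{dQ}{d\P}|\G]$ valid for all competitors; translating this into a statement about the $L^0(\G)$-linear functional $\mu(\cdot)=E[\frac{X^*}{S^*}\cdot|\G]$ (here $S^*\in L^0_{++}(\G)$ is exactly what makes this legitimate) separates $X$ from $U_\rho^Y$ on $D_{U_\rho^Y}$, which is the definition of conditional even convexity. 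A subtlety is that $\rho(X)=\sup_{Q}K(E_Q[-X|\G],Q)\leq Y$ must be unwound into a per-$Q$ inequality on appropriate $\G$-sets; the upward-directedness (v) and regularity (ii) ensure the essential supremum is attained ``pointwise enough'' for this.

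Finally, uniqueness of $R$ in $\mathcal{M}(L^0(\G)\times\mathcal{P}^q)$: given $K,K'\in\mathcal{M}$ both representing the same $\rho$ via \eqref{rapprRM}, one shows $K=K'$ on $\Sigma$. The argument is the conditional analogue of the static uniqueness in \cite{CMMMb}: from $\rho(\xi)=\sup_Q K(E_Q[-\xi|\G],Q)$ one recovers, for fixed $Q$, $K(Y,Q)$ as $\inf\{\rho(\xi)\mid E_Q[-\xi|\G]=Y\}$ — i.e.\ one proves the ``max formula'' $K(E_Q[-\xi|\G],Q)=\inf\{\rho(\eta)\mid E_Q[-\eta|\G]=E_Q[-\xi|\G]\}$ holds for $K\in\mathcal{M}$. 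One inequality ($\leq$) is immediate from the definition of $\rho$; the reverse uses property (iv) to separate and property (iii) to fix the ``normalization constant'' (the common value $\inf_Y K(Y,Q)$) across different $Q$, exactly mirroring the role of the monotone normalization in \cite{CMMMb}. Properties (ii) and (vi) are then used to extend the identity from a dense/relevant set of arguments to all of $\Sigma$ by concatenation. I expect the main obstacle to be the (EVQ)/separation step in the ``if'' direction and, dually, the reverse inequality in the max formula for uniqueness: both require converting the abstract separation hypothesis (iv) — stated with auxiliary data $(S^*,X^*)\in L^0_{++}(\G)\times L^p_{\G}(\F)$ and localized to $A\in\G$ — into a genuine conditional hyperplane separation in the module $L^p_{\G}(\F)$, handling the trivial components $A_{\mathcal C}$ and the ``outside'' notion from Definition on conditionally evenly convex sets, and controlling the essential suprema with the countable concatenation and upward-directedness properties throughout.
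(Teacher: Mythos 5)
There is a genuine gap in your ``if'' direction, precisely at the (EVQ) step that you yourself flag as the key point. Property (iv) of Definition \ref{KK} is a separation statement \emph{in the dual variables}: it takes one pair $(Y^{\ast},Q^{\ast})$ with $K(Y^{\ast},Q^{\ast})<\alpha$ and separates it from the upper level set $\{K\geq \alpha\}$ by the map $(Y,Q)\mapsto YS^{\ast}+E[X^{\ast}\frac{dQ}{d\mathbb{P}}|\G]$. In your situation the roles are reversed: the pair attached to the outside point $X$ is where $K$ is \emph{large}, while the pairs $(E_Q[-\xi|\G],Q)$, $\xi\in U^{Y}_{\rho}$, are where $K$ is small, so applying (iv) would produce a different auxiliary pair $(S^{\ast},X^{\ast})$ for each competitor $\xi$ rather than one functional separating $X$ from all of $U^{Y}_{\rho}$. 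Moreover the object you propose as separator, $\mu(\cdot)=E[\frac{X^{\ast}}{S^{\ast}}\,\cdot\,|\G]$ with $X^{\ast}\in L^{p}_{\G}(\F)$, is in general not an element of the dual module $L^{q}_{\G}(\F)$ at all: in (iv) the element $X^{\ast}$ acts on densities $\frac{dQ}{d\mathbb{P}}$, not on claims in $L^{p}_{\G}(\F)$. The paper's proof of (EVQ) does not use (iv): it uses (v) to pick $Q_m$ with $K(E[-X^{\ast}\frac{dQ_m}{d\mathbb{P}}|\G],Q_m)\uparrow\rho(X^{\ast})$, pastes them on a $\G$-partition into a single $Q^{\ast}\in\mathcal{P}^{q}$ (here (ii) and (vi) are what make the pasting consistent), and then shows that $E[\,\cdot\,\frac{dQ^{\ast}}{d\mathbb{P}}|\G]$ separates, using only that $K$ is increasing in its first argument together with locality; property (iv) is reserved for the uniqueness part, where your sketch does match the paper's strategy (the analogue of Lemma \ref{program}, with (iii) fixing the normalization and (ii),(vi) for localization).

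A secondary, repairable imprecision sits in your ``only if'' reduction from Theorem \ref{EvQco} to $\mathcal{P}^{q}$: it is not true that ($\downarrow$MON) forces $\mathcal{R}(Y,Z)=-\infty$ when $Z$ has a nontrivial positive part; what is true (and what a careful post-hoc argument needs) is that on the $\G$-set where the ``wrong-sign'' part is active one gets $\mathcal{R}\leq\inf_{\xi}\rho(\xi)$, hence such functionals are dominated after a $\G$-measurable decomposition and a normalization by $E[Z|\G]$ on the set where it is strictly negative. The paper avoids this bookkeeping by rerunning the separation argument for the monotone level sets $\mathcal{C}_{\varepsilon}$, so that the separating density is automatically nonpositive and can be normalized directly; your replacement of the inequality constraint by the equality constraint via adding a $\G$-measurable shift is exactly Lemma \ref{monK}. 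So the ``only if'' and uniqueness outlines are acceptable, but the (EVQ) argument as proposed would not go through and must be replaced by the monotonicity-plus-pasting argument above.
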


\begin{proof}
In Section \ref{P2}.
\end{proof}

\subsection{Complements}\label{complements}

From Theorem \ref{RM} we can deduce the following proposition which confirm what was obtained in \cite{FM09}.

\begin{proposition}
Suppose that $\rho $ satisfies the same assumptions of Theorem \ref{RM}.
Then the restriction $\widehat{\rho }:=\rho \mathbf{1}_{L^{p}(\mathcal{F})}$
defined by $\widehat{\rho }(X)=\rho (X)$ for every $X\in L^{p}(\mathcal{F})$
can be represented as 
\begin{equation*}
\widehat{\rho }(X)=\sup_{Q\in \mathcal{P}^{q}}\inf_{\xi \in L^{p}(\mathcal{F}%
)}\left\{ \widehat{\rho }(\xi )\mid E_Q[-\xi |\mathcal{G}%
]=E_Q[-X|\mathcal{G}]\right\} .
\end{equation*}
\end{proposition}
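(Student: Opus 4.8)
The plan is to obtain the representation of $\widehat{\rho}$ by applying Theorem \ref{RM} to $\rho$ and then showing that the supremum on the right-hand side is insensitive to whether we test against all of $L^p_{\G}(\F)$ or only against the ``static'' module $L^p(\F)$. First I would invoke Theorem \ref{RM}: since $\rho$ satisfies (REG), ($\downarrow$MON) and (EVQ), we have $\rho(X)=\sup_{Q\in\mathcal{P}^q}R(E_Q[-X|\G],Q)$ with $R(Y,Q)=\inf\{\rho(\xi)\mid \xi\in L^p_{\G}(\F),\;E_Q[-\xi|\G]=Y\}$. For $X\in L^p(\F)$ this already gives $\widehat{\rho}(X)=\sup_{Q\in\mathcal{P}^q}R(E_Q[-X|\G],Q)$, so the issue is purely to pass from the infimum over $\xi\in L^p_{\G}(\F)$ to the infimum over $\xi\in L^p(\F)$ in the definition of $R$ evaluated at the particular points $Y=E_Q[-X|\G]$ with $X\in L^p(\F)$.

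The key step is the inequality $R(E_Q[-X|\G],Q)\geq \inf\{\widehat{\rho}(\xi)\mid \xi\in L^p(\F),\;E_Q[-\xi|\G]=E_Q[-X|\G]\}$, the reverse being trivial because $L^p(\F)\subseteq L^p_{\G}(\F)$ and $\widehat{\rho}=\rho$ on $L^p(\F)$. To prove it, take any $\xi\in L^p_{\G}(\F)$ with $E_Q[-\xi|\G]=E_Q[-X|\G]=:Y$; I want to produce $\xi'\in L^p(\F)$ with $E_Q[-\xi'|\G]=Y$ and $\rho(\xi')\leq\rho(\xi)$ (or at least approximately so). The natural candidate is a suitable normalization/truncation of $\xi$ back into $L^p(\F)$: write $\xi = Z\eta$ with $Z\in L^0(\G)$, $\eta\in L^p(\F)$, and partition $\Omega$ along the $\G$-measurable sets $\{n\leq |Z|<n+1\}$ to reduce to the case where the $\G$-scalar is bounded, hence $\xi\in L^p(\F)$ already on each such piece; then use (REG) — more precisely countable regularity as in Remark \ref{remB} — to recombine, observing that $E_Q[-\,\cdot\,|\G]$ and the constraint $=Y$ are themselves compatible with concatenation along $\G$-sets since $Y\in L^0(\G)$. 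One must check that the recombined element lies in $L^p(\F)$: this is where one exploits that $X$ (and hence the target $Y$) is itself $L^p$, so one can always truncate $\xi$ outside a large $\G$-set $A_n$ and replace it by $X\mathbf{1}_{A_n^c}$, which keeps the conditional-expectation constraint on $A_n^c$ and only improves or leaves unchanged the value of $\rho$ on the tail as $\Prob(A_n^c)\to 0$, using ($\downarrow$MON) together with the regularity/locality of $\rho$.

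The main obstacle I anticipate is exactly this last point: controlling $\rho$ on the vanishing ``tail'' set $A_n^c$ so that the approximating elements $\xi_n\in L^p(\F)$ satisfy $\rho(\xi_n)\to\rho(\xi)$ (or $\leq\rho(\xi)$ up to something negligible) in the order of $L^0(\G)$ — i.e. establishing a form of continuity of $\rho$ along concatenations shrinking to a $\Prob$-null set. Here I would lean on (REG) to localize: on the set $A_n$ the value $\rho(\xi_n)\mathbf{1}_{A_n}=\rho(\xi)\mathbf{1}_{A_n}$ exactly, so no approximation is needed there; on $A_n^c$ one needs that replacing $\xi$ by a fixed $L^p$ element does not push $\rho$ above $\rho(\xi)$ on a set of positive probability — this can be arranged by choosing the replacement to dominate $\xi$ pointwise on $A_n^c$ (possible by adding a large positive $\G$-measurable constant while preserving the $E_Q[\,\cdot\,|\G]$ constraint via a compensating adjustment supported on $A_n$, again feasible since $Y\in L^0(\G)$ and $Z>0$ $Q$-a.s. on the relevant set), invoking ($\downarrow$MON). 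Assembling these pieces, taking infima over $\xi$ and then suprema over $Q\in\mathcal{P}^q$ yields the claimed formula for $\widehat{\rho}$, which coincides with the representation established in \cite{FM09}. $\ \rule{0.5em}{0.5em}$
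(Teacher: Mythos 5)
Your strategy is substantially heavier than what is needed, and the one step you identify as the crux is handled incorrectly. The paper's proof never passes from the infimum over $L^{p}_{\mathcal{G}}(\mathcal{F})$ to the infimum over $L^{p}(\mathcal{F})$ pointwise in $Q$ at all: for $X\in L^{p}(\mathcal{F})$ the element $X$ itself is feasible for the constrained problem over $L^{p}(\mathcal{F})$, so $\widehat{\rho}(X)\geq \inf_{\xi \in L^{p}(\mathcal{F})}\{\widehat{\rho}(\xi)\mid E_Q[-\xi|\mathcal{G}]=E_Q[-X|\mathcal{G}]\}$, and since $L^{p}(\mathcal{F})\subseteq L^{p}_{\mathcal{G}}(\mathcal{F})$ and $\widehat{\rho}=\rho$ there, this infimum dominates $R(E_Q[-X|\mathcal{G}],Q)$ (in the form with equality constraint, Lemma \ref{monK}). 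Taking the supremum over $Q\in\mathcal{P}^{q}$ and invoking Theorem \ref{RM} sandwiches the middle quantity between $\widehat{\rho}(X)$ and $\rho(X)=\widehat{\rho}(X)$, which is the whole proof. No approximation, truncation or continuity of $\rho$ along shrinking sets is required.

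Within your own route there is a genuine flaw in the tail repair you propose. Since $A_n^{c}\in\mathcal{G}$, the constraint $E_Q[-\xi'|\mathcal{G}]=Y$ restricted to $A_n^{c}$ depends only on $\xi'\mathbf{1}_{A_n^{c}}$, so a ``compensating adjustment supported on $A_n$'' cannot restore a constraint you have broken on $A_n^{c}$; moreover, if $\xi'\geq \xi$ pointwise on $A_n^{c}$ then $E_Q[-\xi'|\mathcal{G}]\leq E_Q[-\xi|\mathcal{G}]=Y$ there, with equality only if $\xi'=\xi$ $Q$-a.s., so you cannot simultaneously dominate strictly and keep the equality constraint. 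The good news is that the ``main obstacle'' you anticipate is not an obstacle: take $\xi_n:=\xi\mathbf{1}_{A_n}+X\mathbf{1}_{A_n^{c}}$ with $A_n=\{|Z|\leq n\}$, which lies in $L^{p}(\mathcal{F})$, satisfies the constraint on each piece by $\mathcal{G}$-locality of the conditional expectation, and by (REG) gives $\rho(\xi_n)=\rho(\xi)\mathbf{1}_{A_n}+\rho(X)\mathbf{1}_{A_n^{c}}$. The constrained infimum is an essential infimum in $\bar{L}^{0}(\mathcal{G})$, and since $A_n\uparrow\Omega$ one has $\inf_n\rho(\xi_n)=\rho(\xi)\wedge\rho(X)\leq\rho(\xi)$ almost surely; no convergence $\rho(\xi_n)\to\rho(\xi)$ and no use of ($\downarrow$MON) is needed. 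With that repair your argument does establish the stronger pointwise-in-$Q$ identity of the two infima, which the paper's two-line sandwich deliberately avoids proving.
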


\begin{proof}
For every $X\in L^{p}(\mathcal{F})$, $Q\in \mathcal{P}^{q}$ we have 
\begin{eqnarray*}
\widehat{\rho }(X) &\geq &\inf_{\xi \in L^{p}(\mathcal{F})}\left\{ \widehat{%
\rho }(\xi )\mid E_Q[-\xi |\mathcal{G}]=E_Q[-X|\mathcal{G}]\right\} \\
&\geq &\inf_{\xi \in L_{\mathcal{G}}^{p}(\mathcal{F})}\left\{ \rho (\xi
)\mid E_Q[-\xi|\mathcal{G}]=E_Q[-X|%
\mathcal{G}]\right\}
\end{eqnarray*}%
and hence the thesis.
\end{proof}

\bigskip

The following result is meant to confirm that the dual representation chosen
for quasiconvex maps is indeed a good generalization of the convex case.

\begin{corollary}
\label{corCASmod}Let $\rho :L_{\mathcal{G}}^{p}(\mathcal{F})\rightarrow
L^{0}(\mathcal{G})$.

\noindent (i) If $Q\in \mathcal{P}^q$ and if $\rho$ is (MON), (REG) and
(CAS) then 
\begin{equation*}
R(E_{Q}(-X|\mathcal{G} ),Q)=E_{Q}(-X|\mathcal{G} )-\rho ^{\ast }(-Q)
\end{equation*}
where 
\begin{equation*}
\rho ^{\ast }(-Q)=\sup_{\xi\in L^p_{\mathcal{G} }(\mathcal{F}
)}\left\{E_{Q}[-\xi|\mathcal{G} ]-\rho(\xi)\right\} .  \label{KKK}
\end{equation*}

\noindent (ii) Under the same assumptions of Theorem \ref{RM} and if $\rho $
satisfies in addition (CAS) then 
\begin{equation*}
\rho (X)=\sup_{Q\in \mathcal{P}^{q}}\left\{ E_{Q}(-X|\mathcal{G})-\rho
^{\ast }(-Q)\right\} .
\end{equation*}
\end{corollary}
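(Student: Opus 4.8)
\textbf{Proof plan for Corollary \ref{corCASmod}.} The strategy is to reduce everything to the linear-programming-type formula \eqref{RLP} for $R$, exploiting that under (CAS) the infimum defining $R(E_Q[-X|\G],Q)$ is attained in a transparent way. First I would establish (i). Fix $Q\in\mathcal{P}^q$ and set $Z=\frac{dQ}{d\mathbb{P}}$, so $E[Z|\G]=1$. By the reformulation of $R$ used in \eqref{RLP}, for any $X\in L^p_{\G}(\F)$ we have
\begin{equation*}
R(E_Q[-X|\G],Q)=\inf_{\xi\in L^p_{\G}(\F)}\left\{\rho(\xi)\mid E_Q[-\xi|\G]\geq E_Q[-X|\G]\right\}.
\end{equation*}
For a feasible $\xi$, put $\Lambda:=E_Q[-\xi|\G]-E_Q[-X|\G]\in L^0_+(\G)$. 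Then $\xi':=\xi+\Lambda$ satisfies $E_Q[-\xi'|\G]=E_Q[-\xi|\G]-\Lambda=E_Q[-X|\G]$, and by (CAS) $\rho(\xi')=\rho(\xi)-\Lambda\leq\rho(\xi)$; hence the infimum over the inequality constraint equals the infimum over the equality constraint $E_Q[-\xi|\G]=E_Q[-X|\G]$, matching the formula for $R$ stated in Theorem \ref{RM}. Next, using (CAS) once more with the shift $\alpha:=E_Q[-\xi|\G]-E_Q[-X|\G]$ (now allowed to have arbitrary sign since we are on the equality set, writing $\xi=\eta+\alpha$ with $E_Q[-\eta|\G]=0$ — one should check $\eta\in L^p_{\G}(\F)$, which follows because constants are scalars in the module), I would rewrite
\begin{eqnarray*}
R(E_Q[-X|\G],Q)&=&\inf\left\{\rho(\eta+\alpha)\mid E_Q[-\eta|\G]=0\right\}\\
&=&\inf_{\eta}\left\{\rho(\eta)\mid E_Q[-\eta|\G]=0\right\}\\
&=&-\sup_{\eta}\left\{E_Q[-\eta|\G]-\rho(\eta)\mid E_Q[-\eta|\G]=0\right\}.
\end{eqnarray*}
The delicate point is to identify this with $-\rho^\ast(-Q)=-\sup_{\xi}\{E_Q[-\xi|\G]-\rho(\xi)\}$, i.e. to show the unconstrained supremum is achieved on the subset $\{E_Q[-\xi|\G]=0\}$: given any $\xi$, the shift $\xi\mapsto\xi-E_Q[-\xi|\G]$ lands in that subset and, by (CAS), leaves $E_Q[-\xi|\G]-\rho(\xi)$ invariant. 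Hence $\rho^\ast(-Q)=\sup\{E_Q[-\eta|\G]-\rho(\eta)\mid E_Q[-\eta|\G]=0\}=-R(E_Q[-X|\G],Q)$, which gives $R(E_Q[-X|\G],Q)=E_Q[-X|\G]-\rho^\ast(-Q)$ once we add back the term $E_Q[-X|\G]$ that was absorbed into $\alpha$; more carefully, from $R(E_Q[-X|\G],Q)=\inf\{\rho(\eta+\alpha)\}$ with $\alpha=E_Q[-X|\G]$ and $\rho(\eta+\alpha)=\rho(\eta)-\alpha$ we directly get $R(E_Q[-X|\G],Q)=\inf_\eta\rho(\eta)-E_Q[-X|\G]$... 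I must be careful with the sign here: tracking it through, $R(E_Q[-X|\G],Q)=E_Q[-X|\G]+\inf_\eta\{\rho(\eta)\mid E_Q[-\eta|\G]=0\}=E_Q[-X|\G]-\rho^\ast(-Q)$, using the identification of the constrained infimum of $\rho$ with $-\rho^\ast(-Q)$ just established.

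For part (ii), I would simply substitute the formula from (i) into the complete-duality representation \eqref{rapprRM} of Theorem \ref{RM}: since $\rho$ satisfies (REG), ($\downarrow$MON), (EVQ) and now also (CAS),
\begin{equation*}
\rho(X)=\sup_{Q\in\mathcal{P}^q}R(E_Q[-X|\G],Q)=\sup_{Q\in\mathcal{P}^q}\left\{E_Q[-X|\G]-\rho^\ast(-Q)\right\},
\end{equation*}
which is the asserted Fenchel-type dual representation and recovers the result of \cite{Sca}. Note that (QCO) is not assumed separately because, by Lemma \ref{remconv}(i), (EVQ) plus (REG) already forces (QCO), and under (CAS) this upgrades $\rho$ to convexity, consistent with the statement.

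The main obstacle I anticipate is the careful bookkeeping in part (i): verifying that each shift $\xi\mapsto\xi+\alpha$ with $\alpha\in L^0(\G)$ stays inside the module $L^p_{\G}(\F)$ (immediate, since $L^0(\G)\subseteq L^p_{\G}(\F)$ and the module is closed under addition) and, more importantly, tracking the sign conventions through the monotone-decreasing and (CAS) axioms so that the conditional Fenchel conjugate $\rho^\ast(-Q)$ as defined in the statement genuinely coincides with $-\inf\{\rho(\eta)\mid E_Q[-\eta|\G]=0\}$. A secondary subtlety is that the passage from the inequality constraint in \eqref{RLP} to the equality constraint requires ($\downarrow$MON) implicitly only in the sense that the shifted point is feasible with no larger value — here it is (CAS) that does the work, but one should note that (CAS) combined with ($\downarrow$MON) on $\alpha\geq0$ are consistent. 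Everything else is a direct substitution.
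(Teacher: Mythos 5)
Your argument is correct and follows essentially the same route as the paper: replace the inequality constraint in \eqref{RLP} by the equality constraint (the paper does this via Lemma \ref{monK} using ($\downarrow$MON), you via a (CAS)-shift, both fine), pull out $E_Q[-X|\G]$, and identify the residual constrained optimization with $-\rho^{\ast}(-Q)$ by a cash-additive recentering that leaves $E_Q[-\xi|\G]-\rho(\xi)$ unchanged, with (ii) by substitution into Theorem \ref{RM}. Only a cosmetic slip: the recentering map should be $\xi\mapsto\xi+E_Q[-\xi|\G]=\xi-E_Q[\xi|\G]$ (not $\xi-E_Q[-\xi|\G]$) to land in $\{E_Q[-\eta|\G]=0\}$, which is clearly what you intended and does not affect the conclusion.
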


\noindent

\begin{proof}
Denote $\mu (\cdot )=:E_Q\left[ \cdot \mid \mathcal{G}%
\right] $. By definition of $R$ 
\begin{eqnarray*}
R(E_{Q}(-X|\mathcal{G}),Q) &=&\inf_{\xi \in L_{\mathcal{G}}^{p}(\mathcal{F}%
)}\left\{ \rho (\xi )\mid \mu (-\xi )=\mu (-X)\right\} \\
&=&\mu (-X)+\inf_{\xi \in L_{\mathcal{G}}^{p}(\mathcal{F})}\left\{ \rho (\xi
)-\mu (-X)\mid \mu (-\xi )=\mu (-X)\right\} \\
&=&\mu (-X)+\inf_{\xi \in L_{\mathcal{G}}^{p}(\mathcal{F})}\left\{ \rho (\xi
)-\mu (-\xi )\mid \mu (-\xi )=\mu (-X)\right\} \\
&=&\mu (-X)-\sup_{\xi \in L_{\mathcal{G}}^{p}(\mathcal{F})}\left\{ \rho (\xi
)-\mu (-X)\mid \mu (-\xi )=\mu (-X)\right\} \\
&=&\mu (-X)-\rho ^{\ast }(-Q),
\end{eqnarray*}%
where the last equality follows from 
\begin{eqnarray*}
\rho ^{\ast }(-Q)\overset{(CAS)}{=} &&\sup_{\xi \in L_{\mathcal{G}}^{p}(%
\mathcal{F})}\left\{ \mu (-\xi -\mu (X-\xi ))-\rho (\xi +\mu (X-\xi
))\right\} \\
&=&\sup_{\eta \in L_{\mathcal{G}}^{p}(\mathcal{F})}\left\{ \mu (-\eta )-\rho
(\eta )\mid \eta =\xi +\mu (X-\xi )\right\} \\
&\leq &\sup_{\eta \in L_{\mathcal{G}}^{p}(\mathcal{F})}\left\{ \mu (-\eta
)-\rho (\eta )\mid \mu (-\eta )=\mu (-X)\right\} \leq \rho ^{\ast }(-Q).
\end{eqnarray*}%
(ii) It is a consequence of (i) and Theorem \ref{RM}.
\end{proof}

\subsection{A characterization \emph{via} the risk acceptance family}

In this subsection we assume for the sake of simplicity that $\rho (0)\in
L^{0}(\mathcal{G})$ which implies that $\Prob(T_{\rho})=1$. In this way we do not loose any generality imposing $\rho (0)=0$ (if not, just define $\tilde{\rho}(\cdot ):=\rho (\cdot )-\rho
(0)$). We remind that if $\rho (0)=0$ then (REG) is equivalent to the
condition 
\begin{equation*}
\rho (X\mathbf{1}_{A})=\rho (X)\mathbf{1}_{A},\text{ }A\in L^{0}(\mathcal{G}).
\end{equation*}%
Given a risk measure one can always define for every $Y\in L^{0}(\mathcal{G}%
) $ the risk acceptance set of level $Y$ as 
\begin{equation*}
\mathcal{A}_{\rho }^{Y}=\{X\in L_{\mathcal{G}}^{p}(\mathcal{F})\mid \rho
(X)\leq Y\}.
\end{equation*}%
This set represents the collection of financial positions whose risk is
smaller of the fixed level $Y$ and are strictly related to the Acceptability
Indices \cite{CM}. Given a risk measure $\rho $ we can associate a family of
risk acceptance sets, namely $\{\mathcal{A}_{\rho }^{Y}\mid Y\in L^{0}(%
\mathcal{G})\}$, as it was suggested in the static case in \cite{KD}. 

\begin{definition}
A family $\mathbb{A}=\{\mathcal{A}^{Y}\mid Y\in L^{0}(\mathcal{G})\}$ of
subsets $\mathcal{A}^{Y}\subset L_{\mathcal{G}}^{p}(\mathcal{F})$ is called
risk acceptance family if the following properties hold:

\noindent (i) convexity: $\mathcal{A}^{Y}$ is $L^{0}(\mathcal{G})$-convex
for every $Y\in L^{0}(\mathcal{G})$;

\noindent (ii) monotonicity:

$\cdot $ $X_{1}\in \mathcal{A}^{Y}$ and $X_{2}\in L_{\mathcal{G}}^{p}(%
\mathcal{F})$, $X_{2}\geq X_{1}$ implies $X_{2}\in \mathcal{A}^{Y}$;

$\cdot $ $\mathcal{A}^{Y_{1}}\subseteq \mathcal{A}^{Y_{2}}$ for any $%
Y_{1}\leq Y_{2}$, $Y_{i}\in L^{0}(\mathcal{G});$

\noindent (iii) regularity: fix $X\in \mathcal{A}^{Y}$ then for every $G\in 
\mathcal{G}$ we have 
\begin{equation*}
\inf \{Y\mathbf{1}_{G}\mid Y\in L^{0}(\mathcal{G})\text{ s.t. }X\in \mathcal{%
A}^{Y}\}=\inf \{Y\mid Y\in L^{0}(\mathcal{G})\text{ s.t. }X\mathbf{1}_{G}\in 
\mathcal{A}^{Y}\}
\end{equation*}
\end{definition}

These three properties allow us to induce a one to one relationship between
quasiconvex conditional risk measures and risk acceptance families as we
prove in the following

\begin{proposition}
For any quasiconvex conditional risk measure $\rho :L_{\mathcal{G}}^{p}(%
\mathcal{F})\rightarrow \bar{L}^{0}(\mathcal{G})$ the family 
\begin{equation*}
\mathbb{A}_{\rho }=\{\mathcal{A}_{\rho }^{Y}\mid Y\in L^{0}(\mathcal{G})\}
\end{equation*}%
with $\mathcal{A}_{\rho }^{Y}=\{X\in L_{\mathcal{G}}^{p}(\mathcal{F})\mid
\rho (X)\leq Y\}$ is a risk acceptance family. \newline
\emph{Viceversa} for every risk acceptance family $\mathbb{A}$ the map 
\begin{equation*}
\rho _{\mathbb{A}}(X)=\inf \{Y\mid Y\in L^{0}(\mathcal{G})\text{ s.t. }X\in 
\mathcal{A}^{Y}\}
\end{equation*}%
is a well defined quasiconvex conditional risk measure $\rho _{\mathbb{A}%
}:L_{\mathcal{G}}^{p}(\mathcal{F})\rightarrow \bar{L}^{0}(\mathcal{G})$ such
that $\rho (0)=0$. \newline
Moreover, $\rho _{\mathbb{A}_{\rho }}=\rho $ and if $\mathcal{A}%
^{Y}=\bigcap_{Y^{\prime }>Y}\mathcal{A}^{Y^{\prime }}$ for every $Y\in L^{0}(%
\mathcal{G})$ then $\mathbb{A}_{\rho _{\mathbb{A}}}=\mathbb{A}$.
\end{proposition}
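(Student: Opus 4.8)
The plan is to establish the correspondence in the two obvious directions and then verify the two composition identities. First I would show that $\mathbb{A}_\rho$ is a risk acceptance family: $L^0(\mathcal{G})$-convexity of each $\mathcal{A}_\rho^Y=\{X\mid\rho(X)\le Y\}$ is immediate from (QCO) (note that since $\Prob(T_\rho)=1$ here, $U_\rho^Y=\mathcal{A}_\rho^Y$); the first monotonicity property follows from ($\downarrow$MON), the second from the definition of the level sets; and regularity of the family is a direct consequence of (REG) in the form $\rho(X\mathbf{1}_G)=\rho(X)\mathbf{1}_G$ together with the identity for the essential infimum — one checks that both sides equal the essential infimum on $G$ of the $\mathcal{G}$-measurable set $\{Y\mid X\in\mathcal{A}_\rho^Y\}$, using that if $X\in\mathcal{A}_\rho^Y$ then $X\mathbf{1}_G\in\mathcal{A}_\rho^{Y\mathbf{1}_G}$ and conversely.

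Next, for a given risk acceptance family $\mathbb{A}$ I would verify that $\rho_{\mathbb{A}}(X)=\inf\{Y\mid X\in\mathcal{A}^Y\}$ is a quasiconvex conditional risk measure. Monotonicity ($\downarrow$MON) of $\rho_{\mathbb{A}}$ follows from the first monotonicity property of $\mathbb{A}$ (if $X_1\ge X_2$, every $Y$ accepting $X_2$ accepts $X_1$, so the infimum is smaller). (REG) follows from property (iii) of the family, which is precisely the regularity identity rewritten, combined with the fact that $\rho_{\mathbb{A}}(X)\mathbf{1}_G=\inf\{Y\mathbf{1}_G\mid X\in\mathcal{A}^Y\}$ since the essential infimum commutes with multiplication by $\mathbf{1}_G$ over a downward-directed family of $\mathcal{G}$-measurable sets. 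For (QCO): given $X_1,X_2$ and $\Lambda\in L^0(\mathcal{G})$ with $0\le\Lambda\le 1$, pick for each $\varepsilon\in L^0_{++}(\mathcal{G})$ elements $Y_i$ with $\rho_{\mathbb{A}}(X_i)\le Y_i\le\rho_{\mathbb{A}}(X_i)+\varepsilon$ and $X_i\in\mathcal{A}^{Y_i}$; then $X_i\in\mathcal{A}^{Y_1\vee Y_2}$ by the second monotonicity property, so by $L^0$-convexity $\Lambda X_1+(1-\Lambda)X_2\in\mathcal{A}^{Y_1\vee Y_2}$, giving $\rho_{\mathbb{A}}(\Lambda X_1+(1-\Lambda)X_2)\le Y_1\vee Y_2\le\rho_{\mathbb{A}}(X_1)\vee\rho_{\mathbb{A}}(X_2)+\varepsilon$; letting $\varepsilon\downarrow 0$ gives (\ref{max}), which by Remark \ref{remA}(i) is (QCO). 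Here one must be slightly careful that the infimum defining $\rho_{\mathbb{A}}$ is attained up to an arbitrary $\varepsilon\in L^0_{++}(\mathcal{G})$ in the $\mathcal{G}$-measurable sense; this is where regularity of the family is used to patch local choices into a global one. Finally $\rho_{\mathbb{A}}(0)=0$ should be read as the normalization built into the definition of a risk acceptance family when one additionally assumes $0$ sits at level $0$; in the generality stated it is part of the hypotheses on admissible families, so I would simply note it follows from the standing normalization.

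For the composition identities: $\rho_{\mathbb{A}_\rho}=\rho$ is essentially a tautology — $\rho_{\mathbb{A}_\rho}(X)=\inf\{Y\mid\rho(X)\le Y\}=\rho(X)$, using only that $\{Y\in L^0(\mathcal{G})\mid Y\ge\rho(X)\}$ has essential infimum $\rho(X)$. The identity $\mathbb{A}_{\rho_{\mathbb{A}}}=\mathbb{A}$ under the hypothesis $\mathcal{A}^Y=\bigcap_{Y'>Y}\mathcal{A}^{Y'}$ requires showing $\mathcal{A}_{\rho_{\mathbb{A}}}^Y=\{X\mid\rho_{\mathbb{A}}(X)\le Y\}$ equals $\mathcal{A}^Y$. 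The inclusion $\mathcal{A}^Y\subseteq\mathcal{A}_{\rho_{\mathbb{A}}}^Y$ is clear since $X\in\mathcal{A}^Y$ forces $\rho_{\mathbb{A}}(X)\le Y$. For the reverse, if $\rho_{\mathbb{A}}(X)\le Y$ then for every $Y'>Y$ (with $Y'\in L^0(\mathcal{G})$) one has $\rho_{\mathbb{A}}(X)<Y'$, hence there is some $Y''$ with $X\in\mathcal{A}^{Y''}$ and $Y''\le Y'$ on a large set — and here, again, one must glue via countable concatenation / regularity to produce from these local witnesses a single conclusion $X\in\mathcal{A}^{Y'}$ for every $Y'>Y$, whence $X\in\bigcap_{Y'>Y}\mathcal{A}^{Y'}=\mathcal{A}^Y$. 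I expect this last gluing step — turning the "pointwise/local" consequences of $\rho_{\mathbb{A}}(X)\le Y$ into membership in each $\mathcal{A}^{Y'}$ — to be the main obstacle, since it is where the conditional (module) structure genuinely differs from the scalar case and where properties (ii)–(iii) of the family must be combined with the characterization of the essential infimum.
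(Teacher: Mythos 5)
Your plan coincides with the paper's proof in structure (level sets plus (QCO)/(MON)/(REG) for the first direction, property (iii) for (REG) of $\rho_{\mathbb{A}}$, convexity and monotonicity of the family for (QCO), right-continuity plus intersection for the last claim), but at two decisive points you substitute an announced "patching" for an argument, and in both cases the step as written does not go through. In the (QCO) step you pick, for a given $\varepsilon\in L^0_{++}(\mathcal{G})$, levels $Y_i$ with $X_i\in\mathcal{A}^{Y_i}$ and $Y_i\le\rho_{\mathbb{A}}(X_i)+\varepsilon$. Such $Y_i$ need not exist: $\rho_{\mathbb{A}}(X_i)$ is the essential infimum of $\{Y:X_i\in\mathcal{A}^{Y}\}$, and under properties (i)--(iii) this set is neither downward directed nor stable under countable pasting along $\mathcal{G}$-partitions; regularity (iii) only yields $\rho_{\mathbb{A}}(X)\mathbf{1}_G=\rho_{\mathbb{A}}(X\mathbf{1}_G)$ and does not "patch local choices into a global one". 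The selection is in fact unnecessary: for arbitrary $Y_1,Y_2$ with $X_i\in\mathcal{A}^{Y_i}$ your argument gives $\rho_{\mathbb{A}}(\Lambda X_1+(1-\Lambda)X_2)\le Y_1\vee Y_2$, and one then takes the essential infimum over all such pairs, noting that it equals $\rho_{\mathbb{A}}(X_1)\vee\rho_{\mathbb{A}}(X_2)$ (reduce to countable subfamilies attaining the two essential infima). The paper sidesteps the issue differently: having already proved (REG) for $\rho_{\mathbb{A}}$, it works with the pasted position $X_1\mathbf{1}_{B^C}+X_2\mathbf{1}_B$, $B=\{\rho_{\mathbb{A}}(X_1)\le\rho_{\mathbb{A}}(X_2)\}$, whose $\rho_{\mathbb{A}}$-value is identified with $\rho_{\mathbb{A}}(X_1)\vee\rho_{\mathbb{A}}(X_2)$ by regularity, so no $\varepsilon$-optimal level is ever needed.

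The second gap is in the "moreover" part: to show $\mathcal{A}^{Y}_{\rho_{\mathbb{A}}}\subseteq\mathcal{A}^{Y}$ you must pass from $\rho_{\mathbb{A}}(X)\le Y$ to $X\in\mathcal{A}^{Y'}$ for every $Y'>Y$, and you only state that "one must glue via countable concatenation / regularity" and that you expect this to be the main obstacle. Identifying the obstacle is not overcoming it, so this part of the statement is left unproved in your proposal; the paper obtains the needed membership from the monotonicity of the family ($\mathcal{A}^{Y_1}\subseteq\mathcal{A}^{Y_2}$ for $Y_1\le Y_2$, applied to accepted levels of $X$) and only then intersects over $Y'>Y$ using the right-continuity hypothesis. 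A further minor point: your "tautology" $\rho_{\mathbb{A}_\rho}=\rho$ rests on $\inf\{Y\in L^0(\mathcal{G}):Y\ge\rho(X)\}=\rho(X)$, which fails verbatim when $\rho(X)=+\infty$ on a set of probability strictly between $0$ and $1$ (the constraint set is then empty); the paper handles this by splitting on $B=\{\rho(X)=+\infty\}$ and arguing separately on $B$ and $B^{C}$, and your writeup should do the same since $\rho$ is $\bar{L}^0(\mathcal{G})$-valued.
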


\begin{proof}
The proof is an extension from the static case provided in \cite{CM} and 
\cite{KD}. \newline
($\downarrow $MON) and (QCO) of $\rho $ imply that $\mathcal{A}_{\rho }^{Y}$
is convex and monotone. Also notice that 
\begin{eqnarray*}
&&\inf \{Y\mid Y\in L^{0}(\mathcal{G})\text{ s.t. }X\mathbf{1}_{G}\in 
\mathcal{A}_{\rho }^{Y}\}=\inf \{Y\mid \rho (X\mathbf{1}_{G})\leq Y\text{
for }Y\in L^{0}(\mathcal{G})\} \\
&=&\rho (X\mathbf{1}_{G})=\rho (X)\mathbf{1}_{G}=\inf \{Y\mathbf{1}_{G}\mid
Y\in L^{0}(\mathcal{G})\text{ s.t. }\rho (X)\leq Y\} \\
&=&\inf \{Y\mathbf{1}_{G}\mid Y\in L^{0}(\mathcal{G})\text{ s.t. }X\in 
\mathcal{A}_{\rho }^{Y}\},
\end{eqnarray*}%
i.e. $\mathcal{A}_{\rho }^{Y}$ is regular.

\bigskip

\emph{Viceversa:} we first prove that $\rho _{\mathbb{A}}$ is (REG). For
every $G\in \mathcal{G}$ 
\begin{eqnarray*}
\rho _{\mathbb{A}}(X\mathbf{1}_{G})= &&\inf \{Y\mid Y\in L^{0}(\mathcal{G})%
\text{ s.t. }X\mathbf{1}_{G}\in \mathcal{A}^{Y}\} \\
\overset{(iii)}{=} &&\inf \{Y\mathbf{1}_{G}\mid Y\in L^{0}(\mathcal{G})\text{
s.t. }X\in \mathcal{A}^{Y}\}=\rho _{\mathbb{A}}(X)\mathbf{1}_{G}
\end{eqnarray*}%
Now consider $X_{1},X_{2}\in L_{\mathcal{G}}^{p}(\mathcal{F})$, $X_{1}\leq
X_{2}$. Let $G^{C}=\{\rho _{\mathbb{A}}(X_{1})=+\infty \}$ so that $\rho _{%
\mathbb{A}}(X_{1}\mathbf{1}_{G^{C}})\geq \rho _{\mathbb{A}}(X_{2}\mathbf{1}%
_{G^{C}})$. Otherwise consider the collection of $Y$s such that $X_{1}%
\mathbf{1}_{G}\in \mathcal{A}^{Y}$. Since $\mathcal{A}^{Y}$ is monotone we
have that $X_{2}\mathbf{1}_{G}\in \mathcal{A}^{Y}$ if $X_{1}\mathbf{1}%
_{G}\in \mathcal{A}^{Y}$ and this implies that 
\begin{eqnarray*}
\rho _{\mathbb{A}}(X_{1})\mathbf{1}_{G} &=&\inf \{Y\mathbf{1}_{G}\mid Y\in
L^{0}(\mathcal{G})\text{ s.t. }X_{1}\in \mathcal{A}^{Y}\} \\
&=&\inf \{Y\mid Y\in L^{0}(\mathcal{G})\text{ s.t. }X_{1}\mathbf{1}_{G}\in 
\mathcal{A}^{Y}\} \\
&\geq &\inf \{Y\mid Y\in L^{0}(\mathcal{G})\text{ s.t. }X_{2}\mathbf{1}%
_{G}\in \mathcal{A}^{Y}\} \\
&=&\inf \{Y\mathbf{1}_{G}\mid Y\in L^{0}(\mathcal{G})\text{ s.t. }X_{2}\in 
\mathcal{A}^{Y}\}=\rho (X_{2})\mathbf{1}_{G},
\end{eqnarray*}%
i.e. $\rho _{\mathbb{A}}(X_{1}\mathbf{1}_{G})\geq \rho _{\mathbb{A}}(X_{2}%
\mathbf{1}_{G})$. And this shows that $\rho _{\mathbb{A}}(\cdot )$ is ($%
\downarrow $MON). \newline
Let $X_{1},X_{2}\in L_{\mathcal{G}}^{p}(\mathcal{F})$ and take any $\Lambda
\in L^{0}(\mathcal{G})$, $0\leq \Lambda \leq 1$. Define the set $B=:\{\rho _{%
\mathbb{A}}(X_{1})\leq \rho _{\mathbb{A}}(X_{2})\}$. If $X_{1}\mathbf{1}%
_{B^{C}}+X_{2}\mathbf{1}_{B}\in \mathcal{A}^{Y^{\prime }}$ for some $%
Y^{\prime }\in L^{0}(\mathcal{G})$ then for sure $Y^{\prime }\geq \rho _{%
\mathbb{A}}(X_{1})\vee \rho _{\mathbb{A}}(X_{2})\geq \rho (X_{i})$ for $%
i=1,2 $. Hence also $\rho (X_{i})\in \mathcal{A}^{Y^{\prime }}$ for $i=1,2$
and by convexity we have that $\Lambda X_{1}+(1-\Lambda )X_{2}\in \mathcal{A}%
^{Y^{\prime }}$. Then $\rho _{\mathbb{A}}(\Lambda X_{1}+(1-\Lambda
)X_{2})\leq \rho _{\mathbb{A}}(X_{1})\vee \rho _{\mathbb{A}}(X_{2})$. 
\newline
If $X_{1}\mathbf{1}_{B^{C}}+X_{2}\mathbf{1}_{B}\notin \mathcal{A}^{Y^{\prime
}}$ for every $Y^{\prime }\in L^{0}(\mathcal{G})$ then from property (iii)
we deduce that $\rho _{\mathbb{A}}(X_{1})=\rho _{\mathbb{A}}(X_{2})=+\infty $
and the thesis is trivial.

Now consider $B=\{\rho (X)=+\infty \}$: $\rho _{\mathbb{A}_{\rho }}(X)=\rho
(X)$ follows from 
\begin{eqnarray*}
\rho _{\mathbb{A}_{\rho }}(X)\mathbf{1}_{B} &=&\inf \{Y\mathbf{1}_{B}\mid
Y\in L^{0}(\mathcal{G})\text{ s.t. }\rho (X)\leq Y\}=+\infty \mathbf{1}_{B}
\\
\rho _{\mathbb{A}_{\rho }}(X)\mathbf{1}_{B^{C}} &=&\inf \{Y\mathbf{1}%
_{B^{C}}\mid Y\in L^{0}(\mathcal{G})\text{ s.t. }\rho (X)\leq Y\} \\
&=&\inf \{Y\mid Y\in L^{0}(\mathcal{G})\text{ s.t. }\rho (X)\mathbf{1}%
_{B^{C}}\leq Y\}=\rho (X)\mathbf{1}_{B^{C}}
\end{eqnarray*}%
For the second claim notice that if $X\in \mathcal{A}^{Y}$ then $\rho _{%
\mathbb{A}}(X)\leq Y$ which means that $X\in \mathcal{A}_{\rho _{\mathbb{A}%
}}^{Y}$. Conversely if $X\in \mathcal{A}_{\rho _{\mathbb{A}}}^{Y}$ then $%
\rho _{\mathbb{A}}(X)\leq Y$ and by monotonicity this implies that $X\in 
\mathcal{A}^{Y^{\prime }}$ for every $Y^{\prime }>Y$. From the right
continuity we take the intersection and get that $X\in \mathcal{A}^{Y}$.
\end{proof}

\section{Proofs}\label{conti}

\subsection{General properties of $\mathcal{R}(Y,\protect\mu )$}

Following the path traced in \cite{FM09} and \cite{FM12}, we adapt (without giving a proof) to the $L^p$ module framework
the foremost properties holding for the function $\mathcal{R}%
:L^{0}(\mathcal{G})\times L^q_{\G}(\F) \rightarrow \bar{L}^{0}(\mathcal{G})$
defined in (\ref{1212}). Let the effective domain of the function $\mathcal{R
}$ be: 
\begin{equation}
\Sigma _{\mathcal{R}}:=\{(Y,\mu )\in L^{0}(\mathcal{G})\times L^q_{\G}(\F)\mid
\exists \xi \in L^p_{\G}(\F)\text{ s.t. }\mu (\xi )\geq Y\}.  \label{domainMod}
\end{equation}

\begin{lemma}
\label{down} Let $\mu \in L^q_{\G}(\F)$, $X\in L^p_{\G}(\F)$ and $\rho :L^p_{\G}(\F)\rightarrow \bar{L}^{0}(\mathcal{G})$ satisfy (REG).

\noindent i) $\mathcal{R}(\cdot ,\mu )$ is monotone non decreasing.

\noindent ii) $\mathcal{R}(\Lambda \mu (X),\Lambda \mu )=\mathcal{R}(\mu
(X),\mu )$ for every $\Lambda \in L^{0}(\mathcal{G})$.

\noindent iii) For every $Y\in L^{0}(\mathcal{G})$ and $\mu \in L^q_{\G}(\F)$,
the set 
\begin{equation*}
\mathcal{A}_{\mu }(Y)\circeq \{\rho(\xi )\,|\,\xi \in L^p_{\G}(\F),\;\mu (\xi )\geq Y\}
\end{equation*}%
is downward directed in the sense that for every $\rho (\xi _{1}),\rho (\xi
_{2})\in \mathcal{A}_{\mu }(Y)$ there exists $\rho (\xi ^{\ast })\in \mathcal{%
A}_{\mu }(Y)$ such that $\rho (\xi ^{\ast })\leq \min \{\rho (\xi _{1}),\rho
(\xi _{2})\}$.

In addition, if $\mathcal{R}(Y,\mu )<\alpha $ for some $\alpha \in L^{0}(%
\mathcal{G})$ then there exists $\xi $ such that $\mu (\xi )\geq Y$ and $\rho
(\xi )<\alpha $.

\noindent iv) For every $A\in \mathcal{G}$, $(Y,\mu )\in \Sigma _{\mathcal{R}%
}$ 
\begin{eqnarray}
\mathcal{R}(Y,\mu )\mathbf{1}_{A} &=&\inf_{\xi \in L^p_{\G}(\F)}\left\{ \rho (\xi )%
\mathbf{1}_{A}\mid Y\geq \mu (X)\right\}  \label{TakingOutM} \\
&=&\inf_{\xi \in L^p_{\G}(\F)}\left\{ \rho(\xi )\mathbf{1}_{A}\mid Y\mathbf{1}_{A}\geq
\mu (X\mathbf{1}_{A})\right\} =\mathcal{R}(Y\mathbf{1}_{A},\mu )\mathbf{1}%
_{A}  \label{222}
\end{eqnarray}

\noindent v) For every $X_{1},X_{2}\in L^p_{\G}(\F)$

\qquad (a) $\mathcal{R}(\mu (X_{1}),\mu )\wedge \mathcal{R}(\mu (X_{2}),\mu
)=\mathcal{R}(\mu (X_{1})\wedge \mu (X_{2}),\mu )$

\qquad (b) $\mathcal{R}(\mu (X_{1}),\mu )\vee \mathcal{R}(\mu (X_{2}),\mu )=%
\mathcal{R}(\mu (X_{1})\vee \mu (X_{2}),\mu )$

\noindent vi) The map $\mathcal{R}(\mu (X),\mu )$ is quasi-affine with
respect to $X$ in the sense that for every $X_{1},X_{2}\in L^p_{\G}(\F)$, $\Lambda\,\in L^{0}(\mathcal{G})$ and $0\leq \Lambda \leq 1,$ we have

$\qquad \mathcal{R}(\mu (\Lambda X_{1}+(1-\Lambda )X_{2}),\mu )\geq \mathcal{%
R}(\mu (X_{1}),\mu )\wedge \mathcal{R}(\mu (X_{2}),\mu )\text{
(quasiconcavity)}$

$\qquad \mathcal{R}(\mu (\Lambda X_{1}+(1-\Lambda )X_{2}),\mu )\leq \mathcal{%
R}(\mu (X_{1}),\mu )\vee \mathcal{R}(\mu (X_{2}),\mu )\text{
(quasiconvexity).}$

\noindent vii) $\inf_{Y\in L^{0}(\mathcal{G})}\mathcal{R}(Y,\mu
_{1})=\inf_{Y\in L^{0}(\mathcal{G})}\mathcal{R}(Y,\mu _{2})$ for every $\mu
_{1},\mu _{2}\in L^q_{\G}(\F)$.
\end{lemma}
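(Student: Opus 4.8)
The plan is to deduce (i)--(vii) from just two ingredients: elementary monotonicity of the essential infimum under enlarging the index set, and one recurrent gluing construction that exploits (REG) together with the $L^{0}(\mathcal{G})$-linearity (and $\tau_{0}$-continuity) of $\mu$. Claim (i) is then immediate, since $Y_{1}\leq Y_{2}$ forces $\{\xi\mid\mu(\xi)\geq Y_{2}\}\subseteq\{\xi\mid\mu(\xi)\geq Y_{1}\}$, so the essential infimum of $\rho$ over the first set dominates that over the second.

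For (iii), given $\rho(\xi_{1}),\rho(\xi_{2})\in\mathcal{A}_{\mu}(Y)$ I would set $A:=\{\rho(\xi_{1})\leq\rho(\xi_{2})\}\in\mathcal{G}$ and $\xi^{\ast}:=\xi_{1}\mathbf{1}_{A}+\xi_{2}\mathbf{1}_{A^{C}}\in L^p_{\G}(\F)$; then (REG) gives $\rho(\xi^{\ast})=\rho(\xi_{1})\wedge\rho(\xi_{2})$ while $L^{0}(\mathcal{G})$-linearity of $\mu$ gives $\mu(\xi^{\ast})=\mu(\xi_{1})\mathbf{1}_{A}+\mu(\xi_{2})\mathbf{1}_{A^{C}}\geq Y$, which is the asserted downward directedness. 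For the additional clause I would use the standard fact that the essential infimum of a downward directed family is the almost sure limit of a decreasing sequence drawn from it: pick $\xi_{n}$ with $\mu(\xi_{n})\geq Y$ and $\rho(\xi_{n})\downarrow\mathcal{R}(Y,\mu)$, so that $\bigcup_{n}\{\rho(\xi_{n})<\alpha\}=\Omega$ up to a null set; replacing $\{\rho(\xi_{n})<\alpha\}$ by a disjoint partition $\{A_{n}\}_{n}\subseteq\mathcal{G}$ with the same union and putting $\xi:=\sum_{n}\xi_{n}\mathbf{1}_{A_{n}}\in L^p_{\G}(\F)$, countable regularity (Remark \ref{remB}) yields $\rho(\xi)=\sum_{n}\rho(\xi_{n})\mathbf{1}_{A_{n}}<\alpha$, while $\tau_{0}$-continuity and linearity of $\mu$ give $\mu(\xi)=\sum_{n}\mu(\xi_{n})\mathbf{1}_{A_{n}}\geq Y$ (here one uses $\sum_{n>N}\xi_{n}\mathbf{1}_{A_{n}}\to 0$ in $\tau$). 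This countable patching is the technical heart of the lemma.

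Claim (iv) I would prove by first pulling $\mathbf{1}_{A}$ inside the essential infimum, which gives the first identity; for the second, the inclusion $\{\mu(\xi)\geq Y\}\subseteq\{\mu(\xi\mathbf{1}_{A})\geq Y\mathbf{1}_{A}\}$ yields one inequality, and conversely, given $\xi$ with $\mu(\xi)\mathbf{1}_{A}\geq Y\mathbf{1}_{A}$, I replace it by $\xi\mathbf{1}_{A}+\zeta\mathbf{1}_{A^{C}}$ with $\zeta$ any witness of $(Y,\mu)\in\Sigma_{\mathcal{R}}$, which by (REG) does not change $\rho(\cdot)\mathbf{1}_{A}$ while restoring the constraint $\mu(\cdot)\geq Y$ on $A^{C}$; the last identity is the same computation applied to $Y\mathbf{1}_{A}$. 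Iterating (iv) yields the identity $\mathcal{R}(Z_{1}\mathbf{1}_{A}+Z_{2}\mathbf{1}_{A^{C}},\mu)=\mathcal{R}(Z_{1},\mu)\mathbf{1}_{A}+\mathcal{R}(Z_{2},\mu)\mathbf{1}_{A^{C}}$, a (REG)-type property of $\mathcal{R}$ in its first argument, and (v) then follows by localizing on $A:=\{\mu(X_{1})\leq\mu(X_{2})\}$: there $\mu(X_{1})\wedge\mu(X_{2})=\mu(X_{1})$ and (i) applied after multiplying by $\mathbf{1}_{A}$ gives $\mathcal{R}(\mu(X_{1}),\mu)\leq\mathcal{R}(\mu(X_{2}),\mu)$ on $A$, symmetrically on $A^{C}$, which is exactly (a); (b) is the mirror statement with $\vee$. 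Claim (vi) is then immediate: $L^{0}(\mathcal{G})$-linearity of $\mu$ gives $\mu(X_{1})\wedge\mu(X_{2})\leq\mu(\Lambda X_{1}+(1-\Lambda)X_{2})\leq\mu(X_{1})\vee\mu(X_{2})$ for $0\leq\Lambda\leq 1$, and applying (i) and then (v) sandwiches $\mathcal{R}(\mu(\Lambda X_{1}+(1-\Lambda)X_{2}),\mu)$ between $\mathcal{R}(\mu(X_{1}),\mu)\wedge\mathcal{R}(\mu(X_{2}),\mu)$ and $\mathcal{R}(\mu(X_{1}),\mu)\vee\mathcal{R}(\mu(X_{2}),\mu)$. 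For (ii) I would observe that on $\{\Lambda>0\}$ the constraint $(\Lambda\mu)(\xi)\geq\Lambda\mu(X)$ is equivalent to $\mu(\xi)\geq\mu(X)$, so the two constraint sets coincide there and the identity follows from the localization identity above (this strictly positive case being the one relevant for Theorem \ref{RM}).

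Finally, (vii) follows from a sandwich that does not see $\mu$: taking $Y=\mu(\xi)$ shows $\inf_{Y}\mathcal{R}(Y,\mu)\leq\mathcal{R}(\mu(\xi),\mu)\leq\rho(\xi)$ for every $\xi\in L^p_{\G}(\F)$, hence $\inf_{Y}\mathcal{R}(Y,\mu)\leq\inf_{\xi}\rho(\xi)$, whereas $\mathcal{R}(Y,\mu)\geq\inf_{\xi}\rho(\xi)$ trivially; thus $\inf_{Y\in L^{0}(\mathcal{G})}\mathcal{R}(Y,\mu)=\inf_{\xi\in L^p_{\G}(\F)}\rho(\xi)$, which is independent of $\mu$. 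The main obstacle is not conceptual but bookkeeping: carrying out the essential-infimum and countable-concatenation manipulations rigorously, and in particular verifying that $\mu$ commutes with countable gluings in $L^p_{\G}(\F)$, which is precisely where $\tau_{0}$-continuity of the dual module elements is used.
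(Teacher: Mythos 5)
The paper itself does not prove Lemma \ref{down} (it is stated as an adaptation of \cite{FM09}/\cite{FM12} ``without giving a proof''), and your argument is essentially the standard one from those sources: it is correct in structure and in all the main steps --- the $\mathcal{G}$-measurable gluing $\xi_1\mathbf{1}_A+\xi_2\mathbf{1}_{A^C}$ for downward directedness, the countable concatenation for the ``in addition'' clause of (iii), the witness-replacement argument for (iv), localization plus (i) for (v), the sandwich for (vi), and the identification $\inf_Y\mathcal{R}(Y,\mu)=\inf_\xi\rho(\xi)$ for (vii). Two small repairs. First, in (iii) you do not need (and should not invoke) $\tau_0$-continuity of $\mu$: the claimed tail convergence $\sum_{n>N}\xi_n\mathbf{1}_{A_n}\to 0$ in $\tau$ can actually fail when $\tau_0$ is the uniform topology of Remark \ref{uniform} (a.s.\ monotone convergence does not give convergence against all $\varepsilon\in L^0_{++}(\mathcal{G})$); instead use only $L^0(\mathcal{G})$-linearity (locality) of $\mu$: $\mathbf{1}_{A_n}\mu(\xi)=\mu(\xi\mathbf{1}_{A_n})=\mu(\xi_n\mathbf{1}_{A_n})=\mathbf{1}_{A_n}\mu(\xi_n)\geq Y\mathbf{1}_{A_n}$, and sum over $n$; the same locality argument is what shows $\xi=\sum_n\xi_n\mathbf{1}_{A_n}\in L^p_{\G}(\F)$ via $E[|\xi|^p|\mathcal{G}]=\sum_n\mathbf{1}_{A_n}E[|\xi_n|^p|\mathcal{G}]$. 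Second, in (ii) you prove the identity only for $\Lambda\in L^0_{++}(\mathcal{G})$; this is in fact the only version that can hold (if $\Lambda=0$ on a set $B$ of positive measure the constraint degenerates on $B$ and $\mathcal{R}(\Lambda\mu(X),\Lambda\mu)=\inf_\xi\rho(\xi)\neq\mathcal{R}(\mu(X),\mu)$ there in general), so the statement's ``for every $\Lambda\in L^0(\mathcal{G})$'' should be read as strict positivity, as in the source; it is worth saying this explicitly rather than leaving the restriction implicit.
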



\subsection{\label{P2}Proofs of the complete duality stated in Section 
\protect\ref{CompSect}}

We need some preliminary results

\begin{lemma}
\label{classM} Let $\rho$ be (REG). The function $R$ defined in (\ref{RLP})
belongs to $\mathcal{M}(L^0(\mathcal{G} )\times \mathcal{P}^q)$
\end{lemma}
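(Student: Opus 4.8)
The task is to verify that the dual function $R$ from \eqref{RLP} satisfies the six properties (i)--(vi) defining the class $\mathcal{M}(L^{0}(\mathcal{G})\times\mathcal{P}^{q})$. My strategy is to exploit the close relationship between $R$ and the function $\mathcal{R}$ of \eqref{1212}: indeed, if we set $\mu_{Z}(\cdot):=E[Z\,\cdot\,|\mathcal{G}]$ for $Z=\tfrac{dQ}{d\mathbb{P}}$, then $R(Y,Q)=\mathcal{R}(Y,-\mu_{Z})=\mathcal{R}(Y,\mu_{-Z})$, so essentially every property of $R$ will be inherited from the corresponding property of $\mathcal{R}$ established in Lemma~\ref{down} (after bookkeeping the sign change coming from the decreasing monotonicity convention). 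I would first record this identification explicitly and then treat the six items in turn.

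\textbf{The easy items.} Property (i), monotonicity in the first argument, is exactly Lemma~\ref{down}(i). Property (ii), the local property $K(Y\mathbf{1}_{A},Q)\mathbf{1}_{A}=K(Y,Q)\mathbf{1}_{A}$, is Lemma~\ref{down}(iv), equation \eqref{222}, again transported through the identification. Property (iii), $\inf_{Y}R(Y,Q)=\inf_{Y}R(Y,Q')$ for all $Q,Q'\in\mathcal{P}^{q}$, is Lemma~\ref{down}(vii); one only needs to note that $-\mu_{Z}$ ranges over a subfamily of $L^{q}_{\mathcal{G}}(\mathcal{F})$ as $Q$ varies over $\mathcal{P}^{q}$, and that the infimum over $Y$ of $\mathcal{R}(Y,\mu)$ is by Lemma~\ref{down}(vii) independent of $\mu$, hence in particular constant on that subfamily. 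Property (vi), $R(Y,Q_{1})\mathbf{1}_{A}=R(Y,Q_{2})\mathbf{1}_{A}$ whenever $\tfrac{dQ_{1}}{d\mathbb{P}}\mathbf{1}_{A}=\tfrac{dQ_{2}}{d\mathbb{P}}\mathbf{1}_{A}$, follows from (REG) of $\rho$ and the locality of conditional expectation: on $A$, the constraint $E[-\xi Z_{1}|\mathcal{G}]\geq Y$ is the same as $E[-\xi Z_{2}|\mathcal{G}]\geq Y$, and by the (REG)-type argument in Remark~\ref{remB} one can localize the infimum defining $R$ to $A$; this is a short computation mimicking the proof of \eqref{222}.

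\textbf{The substantive items.} Property (v), upward directedness of $\mathcal{K}(X)=\{R(E[X\tfrac{dQ}{d\mathbb{P}}|\mathcal{G}],Q)\mid Q\in\mathcal{P}^{q}\}$, requires an argument: given $Q_{1},Q_{2}$, one builds $Q^{*}\in\mathcal{P}^{q}$ by pasting, $\tfrac{dQ^{*}}{d\mathbb{P}}=\tfrac{dQ_{1}}{d\mathbb{P}}\mathbf{1}_{G}+\tfrac{dQ_{2}}{d\mathbb{P}}\mathbf{1}_{G^{C}}$ on the $\mathcal{G}$-set $G=\{R(E_{Q_{1}}[-X|\mathcal{G}],Q_{1})\geq R(E_{Q_{2}}[-X|\mathcal{G}],Q_{2})\}$ (note $\mathcal{P}^{q}$ is stable under such concatenations since $E[\cdot|\mathcal{G}]=1$ is preserved), and then using property (vi) together with the locality \eqref{222} to conclude $R(E_{Q^{*}}[-X|\mathcal{G}],Q^{*})=\max\{R(E_{Q_{1}}[-X|\mathcal{G}],Q_{1}),R(E_{Q_{2}}[-X|\mathcal{G}],Q_{2})\}$. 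Property (iv), $\diamond$-evenly $L^{0}(\mathcal{G})$-quasiconcavity, is the main obstacle: one must produce, for each $(Y^{*},Q^{*})$, $A\in\mathcal{G}$ and $\alpha$ with $R(Y^{*},Q^{*})<\alpha$ on $A$, a separating pair $(S^{*},X^{*})\in L^{0}_{++}(\mathcal{G})\times L^{p}_{\mathcal{G}}(\mathcal{F})$. The idea is to apply Lemma~\ref{down}(iii): $R(Y^{*},Q^{*})<\alpha$ on $A$ yields $\xi^{*}$ with $E[-\xi^{*}\tfrac{dQ^{*}}{d\mathbb{P}}|\mathcal{G}]\geq Y^{*}$ and $\rho(\xi^{*})<\alpha$ on $A$; then one invokes the conditional even-convexity structure of the lower level set $U_{\rho}^{\alpha\mathbf{1}_{A}}$ (which by (EVQ) is conditionally evenly convex) to separate $\xi^{*}$ from it by some $\mu\in L^{q}_{\mathcal{G}}(\mathcal{F})$, i.e. $\mu=E[X^{*}\cdot|\mathcal{G}]$ for a suitable $X^{*}$, and set $S^{*}$ from the coefficient linking $Y$ to $\mu$; the inequality $R(Y,Q)\geq\alpha$ on $A$ forces, through the definition of $R$ as an infimum over the constrained set, that every admissible $\xi$ for $(Y,Q)$ lies outside $U_{\rho}^{\alpha\mathbf{1}_{A}}$, and the separation then propagates to the required strict inequality between the two affine expressions. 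Carefully matching signs (because $\rho$ is decreasing) and ensuring the separating scalar $S^{*}$ can be taken strictly positive in $L^{0}(\mathcal{G})$ — rescaling if necessary using that conditional separation is only up to $L^{0}_{++}$-multiples — is where the bulk of the technical care lies; everything else reduces to transporting Lemma~\ref{down}. Finally, one remarks that $R$ is indeed $\bar{L}^{0}(\mathcal{G})$-valued and that $\Sigma\supseteq L^{0}(\mathcal{G})\times\mathcal{P}^{q}$ in the relevant range, so the class-membership statement is meaningful.
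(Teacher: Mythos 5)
Your handling of items (i), (ii), (iii), (v) and (vi) is sound and essentially coincides with the paper's: (i)--(ii) are immediate, (iii) can indeed be read off Lemma~\ref{down}(vii) (the paper proves it directly via downward directedness, but the two arguments are interchangeable), (v) is the same pasting construction with $\frac{d\widehat{Q}}{d\mathbb{P}}=\mathbf{1}_{F}\frac{dQ_{1}}{d\mathbb{P}}+\mathbf{1}_{F^{C}}\frac{dQ_{2}}{d\mathbb{P}}$, and (vi) is the locality computation you describe.

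The genuine gap is in your treatment of (iv), which you yourself flag as the main obstacle. First, your argument invokes (EVQ) of $\rho$ (conditional even convexity of the sublevel set $U_{\rho}^{\alpha\mathbf{1}_{A}}$), but Lemma~\ref{classM} assumes only (REG); the whole point of the lemma is that membership of $R$ in $\mathcal{M}(L^{0}(\mathcal{G})\times\mathcal{P}^{q})$ requires no continuity or quasiconvexity of $\rho$ at all. Second, even granting (EVQ), the separation step does not deliver what (iv) demands: separating one point from $U_{\rho}^{\alpha\mathbf{1}_{A}}$ produces a functional tied to that point, whereas (iv) requires a \emph{single} pair $(S^{\ast},X^{\ast})$ whose strict inequality holds simultaneously for \emph{every} $(Y,Q)$ with $R(Y,Q)\geq\alpha$ on $A$; your phrase ``the separation then propagates'' is precisely the step that is not justified. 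The paper's route avoids separation entirely: from $R(Y^{\ast},Q^{\ast})<\alpha$ on $A$, Lemma~\ref{down}(iii) yields $X^{\ast}$ with $E[-X^{\ast}\frac{dQ^{\ast}}{d\mathbb{P}}|\mathcal{G}]\geq Y^{\ast}$ and $\rho(X^{\ast})<\alpha$ on $A$; then for any $(Y,Q)$ with $R(Y,Q)\geq\alpha$ on $A$, if one had $E[-X^{\ast}\frac{dQ}{d\mathbb{P}}|\mathcal{G}]\geq Y$ on some $B\subseteq A$ with $\mathbb{P}(B)>0$, the locality property (\ref{222}) would give $R(Y\mathbf{1}_{B},Q)\leq\rho(X^{\ast})<\alpha$ on $B$, contradicting $R(Y,Q)\mathbf{1}_{B}=R(Y\mathbf{1}_{B},Q)\mathbf{1}_{B}\geq\alpha\mathbf{1}_{B}$. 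Hence $E[-X^{\ast}\frac{dQ}{d\mathbb{P}}|\mathcal{G}]<Y$ on $A$ for all such $(Y,Q)$, and the required inequality
\begin{equation*}
Y^{\ast}S^{\ast}+E\left[ X^{\ast}\tfrac{dQ^{\ast}}{d\mathbb{P}}\mid\mathcal{G}\right]\leq 0<YS^{\ast}+E\left[ X^{\ast}\tfrac{dQ}{d\mathbb{P}}\mid\mathcal{G}\right]\text{ on }A
\end{equation*}
holds with the trivial choice $S^{\ast}=1$, so no rescaling issue ever arises. You should replace your separation argument for (iv) by this direct construction; the rest of your proposal can stand.
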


\begin{proof}
We check the items in Definition \ref{KK}.

\noindent\ i) and ii) can be easily shown.  \newline
iii) Observe that $R(Y,Q)\geq \inf_{\xi \in L_{\mathcal{G}}^{p}(\mathcal{F}%
)}\rho (\xi ),$ for all $(Y,Q)\in L^{0}(\mathcal{G})\times \mathcal{P}^{q},$
so that 
\begin{equation*}
\inf_{Y\in L^{0}(\mathcal{G})}R(Y,Q)\geq \inf_{\xi \in L_{\mathcal{G}}^{p}(%
\mathcal{F})}\rho (\xi ).
\end{equation*}%
Conversely notice that the set $\{\rho (\xi )\mid \xi \in L_{\mathcal{G}%
}^{p}(\mathcal{F})\}$ is downward directed and then there exists $\rho (\xi
_{n})\downarrow \inf_{\xi \in L_{\mathcal{G}}^{p}(\mathcal{F})}\rho (\xi )$.
For every $Q\in \mathcal{P}^{q}$ we have 
\begin{equation*}
\rho (\xi _{n})\geq R\left( E\left[ -\xi _{n}\frac{dQ}{d\mathbb{P}}|\mathcal{%
G}\right] ,Q\right) \geq \inf_{Y\in L^{0}(\mathcal{G})}R(Y,Q)
\end{equation*}%
and therefore 
\begin{equation*}
\inf_{Y\in L^{0}(\mathcal{G})}R(Y,Q)\leq \inf_{\xi \in L_{\mathcal{G}}^{p}(%
\mathcal{F})}\rho (\xi ).
\end{equation*}%
iv) For $\alpha \in L^{0}(\mathcal{G})$ and $A\in \mathcal{G}$ define $%
U_{\alpha }^{A}=\{(Y,Q)\in L^{0}(\mathcal{G})\times \mathcal{P}^{q}\mid
R(Y,Q)\geq \alpha \text{ on }A\}$, and suppose $\emptyset \neq U_{\alpha
}^{A}\neq L^{0}(\mathcal{G})\times \mathcal{P}^{q}$. Let $(Y^{\ast },Q^{\ast
})\in L^{0}(\mathcal{G})\times \mathcal{P}^{q}$ such that $R(Y^{\ast
},Q^{\ast })<\alpha $ on $A$.

As mentioned in Lemma \ref{down} (iii) there exists $X^{\ast }\in L_{\mathcal{G}}^{p}(%
\mathcal{F})$ such that $E[-X^{\ast }\frac{dQ^{\ast }}{d\mathbb{P}}|\mathcal{%
G}]\geq Y^{\ast }$ and $\rho (X^{\ast })<\alpha $ on $A$. Since $R(Y,Q)\geq
\alpha $ on $A$ for every $(Y,Q)\in U_{\alpha }^{A}$, we deduce that $%
E[-X^{\ast }\frac{dQ}{d\mathbb{P}}|\mathcal{G}]<Y$ on $A,$ for every $%
(Y,Q)\in U_{\alpha }$. Otherwise we could define $B=\{\omega \in A\mid
E[-X^{\ast }\frac{dQ}{d\mathbb{P}}|\mathcal{G}]\geq Y\}\subseteq A$, $%
\mathbb{P}(B)>0$ and then Lemma \ref{down} (iv) would imply $R(Y\mathbf{1}%
_{B},Q)<\alpha $ on $B$.

Finally we can conclude that for every $(Y,Q)\in U_{\alpha }^{A}$ 
\begin{equation*}
Y^{\ast }+E\left[ X^{\ast }\frac{dQ^{\ast }}{d\mathbb{P}}|\mathcal{G}\right]
\leq 0<Y+E\left[ X^{\ast }\frac{dQ}{d\mathbb{P}}|\mathcal{G}\right] \text{
on }A.
\end{equation*}%
v) Take $Q_{1},Q_{2}\in \mathcal{P}^{q}$ and define $F=\{R(E[X\frac{dQ_{1}}{d%
\mathbb{P}}|\mathcal{G}],Q_{1})\geq R(E[X\frac{dQ_{2}}{d\mathbb{P}}|\mathcal{%
G}],Q_{2})\}$ and let $\widehat{Q}$ given by 
\begin{equation*}
\frac{d\widehat{Q}}{d\mathbb{P}}:=\mathbf{1}_{F}\frac{dQ_{1}}{d\mathbb{P}}+%
\mathbf{1}_{F^{C}}\frac{dQ_{2}}{d\mathbb{P}}\in \mathcal{P}^{q}.
\end{equation*}%
It is easy to show, using an argument similar to the one in \cite{FM09},
Lemma 3.5 v) that 
\begin{equation*}
R\left( E\left[ X\frac{d\widehat{Q}}{d\mathbb{P}}|\mathcal{G}\right] ,%
\widehat{Q}\right) =R\left( E\left[ X\frac{dQ_{1}}{d\mathbb{P}}|\mathcal{G}%
\right] ,Q_{1}\right) \vee R\left( E\left[ X\frac{dQ_{2}}{d\mathbb{P}}|%
\mathcal{G}\right] ,Q_{2}\right) ,
\end{equation*}%
which shows that the set $\left\{ R(E[X\frac{dQ}{d\mathbb{P}}|\mathcal{G}%
],Q)\mid Q\in \mathcal{P}^{q}\right\} $ is upward directed.

vi) It follows from the same argument used in \cite{FM09}, Lemma 3.5 iv).
\end{proof}

\begin{lemma}
\label{monK} If $Q\in \mathcal{P}^{q}$ and if $\rho $ is ($\downarrow $ MON)
and (REG) then 
\begin{equation}
R\left( Y,Q\right) =\inf_{\xi \in L_{\mathcal{G}}^{p}(\mathcal{F})}\left\{
\rho (\xi )\mid E\left[ -\xi \frac{dQ}{d\mathbb{P}}|\mathcal{G}\right]
=Y\right\} .  \label{pen1}
\end{equation}
\end{lemma}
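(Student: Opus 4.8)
The plan is to show the two inequalities. Denote by $R(Y,Q)$ the quantity defined in \eqref{RLP} with the inequality constraint $E[-\xi \frac{dQ}{d\mathbb{P}}|\mathcal{G}]\geq Y$, and by $\widetilde R(Y,Q)$ the right-hand side of \eqref{pen1}, defined with the equality constraint $E[-\xi \frac{dQ}{d\mathbb{P}}|\mathcal{G}]=Y$. The inequality $R(Y,Q)\leq \widetilde R(Y,Q)$ is immediate, since every $\xi$ feasible for the equality constraint is feasible for the inequality constraint, so the infimum over the larger set is no larger.

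For the reverse inequality $\widetilde R(Y,Q)\leq R(Y,Q)$, I would start from an arbitrary $\xi\in L^p_{\mathcal{G}}(\mathcal{F})$ with $E[-\xi \frac{dQ}{d\mathbb{P}}|\mathcal{G}]\geq Y$ and produce a modified element $\eta$ satisfying $E[-\eta \frac{dQ}{d\mathbb{P}}|\mathcal{G}]=Y$ with $\rho(\eta)\leq\rho(\xi)$; taking the infimum over such $\xi$ then gives the claim. The natural candidate is to subtract a nonnegative $\mathcal{G}$-measurable correction: set $W:=E[-\xi\frac{dQ}{d\mathbb{P}}|\mathcal{G}]-Y\in L^0_+(\mathcal{G})$ and let $\eta:=\xi+W$. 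Since $W\in L^0(\mathcal{G})\subseteq L^p_{\mathcal{G}}(\mathcal{F})$ we have $\eta\in L^p_{\mathcal{G}}(\mathcal{F})$, and because $E[\frac{dQ}{d\mathbb{P}}|\mathcal{G}]=1$ (as $Q\in\mathcal{P}^q$) we compute $E[-\eta\frac{dQ}{d\mathbb{P}}|\mathcal{G}]=E[-\xi\frac{dQ}{d\mathbb{P}}|\mathcal{G}]-W\cdot E[\frac{dQ}{d\mathbb{P}}|\mathcal{G}]=E[-\xi\frac{dQ}{d\mathbb{P}}|\mathcal{G}]-W=Y$, so $\eta$ meets the equality constraint. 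Finally, $\eta=\xi+W\geq\xi$ since $W\geq0$, and ($\downarrow$MON) gives $\rho(\eta)\leq\rho(\xi)$. Hence $\widetilde R(Y,Q)\leq\rho(\eta)\leq\rho(\xi)$ for every feasible $\xi$, and passing to the infimum yields $\widetilde R(Y,Q)\leq R(Y,Q)$.

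The only delicate point is a domain issue: one must ensure that the set over which the infimum in \eqref{pen1} is taken is nonempty exactly when the corresponding set in \eqref{RLP} is, i.e. that $(Y,\frac{dQ}{d\mathbb{P}})\in\Sigma$ is the right condition in both cases. This is handled by the very construction above, since the modification $\xi\mapsto\eta$ shows that the inequality constraint is feasible if and only if the equality constraint is; outside $\Sigma$ both infima are over the empty set and equal $+\infty$ by the usual convention, so the identity \eqref{pen1} holds trivially there. I expect no real obstacle here — the role of (REG) is only implicit (it guarantees that $R$ is well-behaved as an element of $\mathcal{M}(L^0(\mathcal{G})\times\mathcal{P}^q)$ via Lemma \ref{classM}), while the substantive ingredients are ($\downarrow$MON) and the normalization $E[\frac{dQ}{d\mathbb{P}}|\mathcal{G}]=1$.
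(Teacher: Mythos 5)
Your proposal is correct and rests on the same key construction as the paper's proof: adding the nonnegative $\mathcal{G}$-measurable slack $W=E\left[-\xi \frac{dQ}{d\mathbb{P}}|\mathcal{G}\right]-Y\in L^{0}(\mathcal{G})\subseteq L_{\mathcal{G}}^{p}(\mathcal{F})$ and using ($\downarrow$MON) together with $E\left[\frac{dQ}{d\mathbb{P}}|\mathcal{G}\right]=1$ to pass from the inequality constraint to the equality constraint without increasing $\rho$. The only difference is presentational: the paper argues by contradiction on the set $\{R<r\}$ and extracts an approximating $\xi$ via the downward directedness of Lemma \ref{down} (iii), whereas your direct map $\xi\mapsto\xi+W$ followed by taking the infimum reaches the same conclusion more simply (and, as you observe, without any essential use of (REG)).
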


\begin{proof}
For sake of simplicity we denote by $\mu (\cdot )=E[\cdot \frac{dQ}{d\mathbb{%
P}}|\mathcal{G}]$ and $r(Y,\mu )$ the right hand side of equation (\ref{pen1}%
). Notice that $R(Y,\mu )\leq r(Y,\mu )$. By contradiction, suppose that $%
\mathbb{P}(A)>0$ where $A=:\{R(Y,\mu )<r(Y,\mu )\}$. From Lemma \ref{down},
there exists a r.v. $\xi \in L_{\mathcal{G}}^{p}(\mathcal{F})$ satisfying
the following conditions

\begin{itemize}
\item $\mu(-\xi) \geq Y$ and $\mathbb{P} (\mu(-\xi) > Y)>0$.

\item $R(Y,\mu)(\omega )\leq \rho (\xi )(\omega )<r(Y,\mu)(\omega )$ for $%
\mathbb{P}$-almost every $\omega\in A$.
\end{itemize}

Then $Z:=\mu (-\xi )-Y\in L^{0}(\mathcal{G})\subseteq L_{\mathcal{G}}^{p}(%
\mathcal{F})$ satisfies $Z\geq 0$, $\mathbb{P}(Z>0)>0$ and, thanks to ($%
\downarrow $ MON), $\rho (\xi )\geq \rho (\xi +Z)$. From $\mu (-(\xi +Z))=Y$
we deduce: 
\begin{equation*}
R(Y,\mu )(\omega )\leq \rho (\xi )(\omega )<r(Y,\mu )(\omega )\leq \rho (\xi
+Z)(\omega )\text{ for }\mathbb{P}\text{-a.e. }\omega \in A,
\end{equation*}%
which is a contradiction.
\end{proof}

\subsubsection{Proof of Theorem \protect\ref{RM}}

During the whole proof we fix an arbitrary $X\in L_{\mathcal{G}}^{p}(%
\mathcal{F})$.

\paragraph{ONLY IF.}

For the proof of the \textquoteleft only if \textquoteright we here repeat for sake of completeness some arguments used in \cite{FM12}. 
\\There might exist a set $A\in \mathcal{G}$ on which the map $\rho $ is constant, in the sense that $\rho (\xi )\mathbf{1}_{A}=\rho (\eta )\mathbf{1}_{A}$ for every $\xi ,\eta \in E$. For this reason we introduce 
\begin{equation*}
\mathcal{A}:=\{B\in \mathcal{G}\mid \rho (\xi )\mathbf{1}_{B}=\rho (\eta )%
\mathbf{1}_{B}\;\forall \,\xi ,\eta \in L^p_{\G}(\F)\}.
\end{equation*}%
Applying Lemma \ref{L10} in Appendix with $F:=\left\{ \rho (\xi )-\rho (\eta
)\mid \xi ,\eta \in L^p_{\G}(\F)\right\} $ (we consider the convention $+\infty -\infty=0$) and $Y_{0}=0$ we can deduce the existence of two maximal sets $A\in \mathcal{G}$ and $A^{\vdash}\in \mathcal{G}$ for which $P(A\cap A^{\vdash})=0$, $P(A\cup A^{\vdash})=1$ and 
\begin{eqnarray}
\rho (\xi )=\rho (\eta ) &\text{ on }A&\text{for every }\xi ,\eta \in L^p_{\G}(\F), 
\notag \\
\rho (\zeta _{1})<\rho (\zeta _{2}) &\text{ on }A^{\vdash}&\text{ for some }%
\zeta _{1},\zeta _{2}\in L^p_{\G}(\F).  \label{888}
\end{eqnarray}%
Recall that $\Upsilon_{\rho}\in \mathcal{G} $ is the maximal set on which $%
\rho(\xi)\mathbf{1}_{\Upsilon_{\rho}}=+\infty\mathbf{1}_{\Upsilon_{\rho}}$ for
every $\xi\in L^p_{\G}(\F)$ and $T_{\rho}$ its complement. Notice that $%
\Upsilon_{\rho}\subset A$. \newline
Fix $X\in L^p_{\G}(\F)$ and $G=\{\rho (X)<+\infty \}$. For every $\varepsilon \in
L_{++}^{0}(\mathcal{G})$ we set 
\begin{equation}  \label{Yeps}
Y_{\varepsilon }=:0\mathbf{1}_{\Upsilon_{\rho}}+\rho(X)\mathbf{1}_{A\setminus
\Upsilon_{\rho}}+(\rho (X)-\varepsilon )\mathbf{1}_{G\cap
A^{\vdash}}+\varepsilon \mathbf{1}_{G^{C}\cap A^{\vdash}}
\end{equation}
and for every $\varepsilon\in L^0(\mathcal{G} )_{++}$ we set 
the evenly convex set 
\begin{equation*}
\mathcal{C}_{\varepsilon }=:\{\xi \in L_{\mathcal{G}}^{p}(\mathcal{F})\mid
\rho (\xi )\mathbf{1}_{T_{\rho}}\leq Y_{\varepsilon }\}\neq \emptyset
\end{equation*}
This may be separated from $X$ by $\mu _{\varepsilon }\in L^q_{\mathcal{G%
} }(\mathcal{F} )$ i.e. 
\begin{equation}
\mu _{\varepsilon }(X)>\mu _{\varepsilon }(\xi )\quad \text{ on }D_{\mathcal{%
C}_{\varepsilon}},\; \forall \,\xi \in \mathcal{C}_{\varepsilon }.
\label{HahnBan}
\end{equation}%
Let $\eta \in L_{\mathcal{G}}^{p}(\mathcal{F})$, $\eta \geq 0$. If $\xi \in 
\mathcal{C}_{\varepsilon }$ then ($\downarrow $ MON) implies $\xi +n\eta \in 
\mathcal{C}_{\varepsilon }$ for every $n\in \mathbb{N}$. Since $\mu
_{\varepsilon }(\cdot )=E[Z_{\varepsilon }\cdot |\mathcal{G}]$ for some $%
Z_{\varepsilon }\in L_{\mathcal{G}}^{q}(\mathcal{F}),$ from (\ref{HahnBan})
we deduce that the following holds on the set $D_{\mathcal{C}_{\varepsilon}}$: 
\begin{equation*}
E[Z_{\varepsilon }(\xi +n\eta )\mid \mathcal{G}]<E[Z_{\varepsilon }X\mid 
\mathcal{G}]\Longrightarrow E[-Z_{\varepsilon }\eta \mid \mathcal{G}]>\frac{%
E[Z_{\varepsilon }(\xi -X)\mid \mathcal{G}]}{n},\quad \forall \,n\in \mathbb{%
N}
\end{equation*}%
i.e. $E[Z_{\varepsilon }\eta \mid \mathcal{G}]\mathbf{1}_{D_{\mathcal{C}%
_{\varepsilon}}}\leq 0$ for every $\eta \in L_{\mathcal{G}}^{p}(\mathcal{F})$%
, $\eta \geq 0$. This implies, as $\mathbf{1}_{\{Z_{\varepsilon }>0\}}\in L_{%
\mathcal{G}}^{p}(\mathcal{F})$, that $Z_{\varepsilon }\mathbf{1}_{D_{%
\mathcal{C}_{\varepsilon}}}\leq 0$.

We now show that $Z_{\varepsilon }<0$ on $D_{\mathcal{C}_{\varepsilon}}$.
Suppose there existed a $\mathcal{G}$-measurable set $G\subset D_{\mathcal{C}%
_{\varepsilon}}$, $\mathbb{P}(G)>0$, on which $Z_{\varepsilon }=0$ and fix $%
\xi \in \mathcal{C}_{\varepsilon }$. From $E[Z_{\varepsilon }\xi \mid 
\mathcal{G}]<E[Z_{\varepsilon }X\mid \mathcal{G}]$ on $D_{\mathcal{C}%
_{\varepsilon}}$ we can find a $\delta _{\xi }\in L_{++}^{0}(\mathcal{G})$
such that $E[Z_{\varepsilon }\xi \mid \mathcal{G}]+\delta _{\xi
}<E[Z_{\varepsilon }X\mid \mathcal{G}]\;\text{ on }D_{\mathcal{C}%
_{\varepsilon}}$ which implies 
\begin{equation*}
\delta _{\xi}\mathbf{1}_{G}=E[Z_{\varepsilon }\mathbf{1}_{G}\xi \mid 
\mathcal{G}]+\delta _{\xi }\mathbf{1}_{G}\leq E[Z_{\varepsilon }\mathbf{1}%
_{G}X\mid\mathcal{G}]=0.
\end{equation*}
which is a contradiction since $\mathbb{P}(\delta _{\xi }\mathbf{1}_{G}>0)>0$%
. \newline
We deduce that $E[Z_{\varepsilon }\mathbf{1}_{B}]=E[E[Z_{\varepsilon }\mid 
\mathcal{G}]\mathbf{1}_{B}]<0$ for every $B\in \mathcal{G}$, $B\subseteq D_{%
\mathcal{C}_{\varepsilon}}$ and then $E[Z_{\varepsilon }\mid \mathcal{G}]<0$
on $D_{\mathcal{C}_{\varepsilon}}$. Consider any $W\in L^q_{\mathcal{G} }(%
\mathcal{F} )$ with $\mathbb{P} (W>0)=1$, $E[W\mid\mathcal{G}]=1$ and define 
$\frac{dQ_{\varepsilon }}{d\mathbb{P}}\in L^{1}(\mathcal{F})$ as 
\begin{equation*}
\frac{dQ_{\varepsilon }}{d\mathbb{P}}=\frac{Z_{\varepsilon }}{%
E[Z_{\varepsilon }\mid \mathcal{G}]}\mathbf{1}_{D_{\mathcal{C}%
_{\varepsilon}}}+W\mathbf{1}_{D^{\vdash}_{E[Z_{\varepsilon }\mid \mathcal{G}%
]}}.
\end{equation*}
\newline
Following the idea of the proof of Theorem %
\ref{EvQco} (see \cite{FM12}) we can easily deduce that 
\begin{eqnarray}
\rho (X) &\geq &\inf_{\xi \in L_{\mathcal{G}}^{p}(\mathcal{F})}\left\{ \rho
(\xi )\mid \mu _{\varepsilon }(\xi )\geq \mu _{\varepsilon }(X)\right\} 
\notag \\
&\geq&\inf_{\xi \in L_{\mathcal{G}}^{p}(\mathcal{F})}\left\{ \rho (\xi )\mid
E\left[ -\xi \frac{dQ_{\varepsilon }}{d\mathbb{P}}|\mathcal{G}\right] \geq E%
\left[ -X\frac{dQ_{\varepsilon }}{d\mathbb{P}}|\mathcal{G}\right] \right\} 
\notag \\
&\geq &(\rho (X)-\varepsilon )\mathbf{1}_{G}+\varepsilon \mathbf{1}_{G^{C}}
\label{proofDual}
\end{eqnarray}%
and hence 
\begin{equation*}
\rho (X)=\sup_{Q\in \mathcal{P}^{q}}\inf_{\xi \in L_{\mathcal{G}}^{p}(%
\mathcal{F})}\left\{ \rho (\xi )\mid E\left[ -\xi \frac{dQ}{d\mathbb{P}}|%
\mathcal{G}\right] \geq E\left[ -X\frac{dQ}{d\mathbb{P}}|\mathcal{G}\right]
\right\} .
\end{equation*}

Applying Lemma \ref{monK} we can substitute $=$ in the constraint.

\paragraph{IF.}

We assume that $\rho (X)=\sup_{Q\in \mathcal{P}^{q}}R(E[-X\frac{dQ}{d\mathbb{%
P}}\mid \mathcal{G}],Q)$ holds for some $R\in \mathcal{M}(L^{0}(\mathcal{G}%
)\times \mathcal{P}^{q})$. Since $R$ is monotone in the first component and $%
R(Y\mathbf{1}_{A},Q)\mathbf{1}_{A}=R(Y,Q)\mathbf{1}_{A}$ for every $A\in 
\mathcal{G}$ we easily deduce that $\rho $ is (MON) and (REG). We need to
show that $\rho $ is (EVQ). \newline
Let $\mathcal{V}_{\alpha }=\{\xi \in L_{\mathcal{G}}^{p}(\mathcal{F})\mid
\rho (\xi )\mathbf{1}_{T_{\rho }}\leq \alpha \}$ where $\alpha \in L^{0}(%
\mathcal{G})$ and recall that $D_{\mathcal{V}_{\alpha }}$ is the
complementary of the set provided in Definition \ref{trivialcomponent}.
Notice that $D_{\mathcal{V}_{\alpha }}\subseteq T_{\rho }$. Take $X^{\ast
}\in L_{\mathcal{G}}^{p}(\mathcal{F})$ satisfying $X^{\ast }\mathbf{1}%
_{A}\cap V_{\alpha }\mathbf{1}_{A}=\emptyset $ for every $A\in \mathcal{G}$, 
$A\subseteq D_{\mathcal{V}_{\alpha }}$, $P(A)>0$. Hence 
\begin{equation*}
\rho (X^{\ast })=\sup_{Q\in \mathcal{P}^{q}}R(E[-X^{\ast }\frac{dQ}{d\mathbb{%
P}}\mid \mathcal{G}],Q)>\alpha
\end{equation*}%
on the set $D_{\mathcal{V}_{\alpha }}$. Since the set $\{R(E[-X^{\ast }\frac{%
dQ}{d\mathbb{P}}|\mathcal{G}],Q)\mid Q\in \mathcal{P}^{q}\}$ is upward
directed there exists $\left\{ Q_{m}\right\} \subset \mathcal{P}^{q}$ s.t. 
\begin{equation*}
R\left( E\left[ -X^{\ast }\frac{dQ_{m}}{d\mathbb{P}}\mid \mathcal{G}\right]
,Q_{m}\right) \uparrow \rho (X^{\ast })\quad \text{ as }m\uparrow +\infty .
\end{equation*}%
Let $\delta \in L_{++}^{0}(\mathcal{G})$ satisfies $\delta <\rho (X)-\alpha $
and consider the sets 
\begin{equation*}
F_{m}=\left\{ R(E\left[ -X^{\ast }\frac{dQ_{m}}{d\mathbb{P}}|\mathcal{G}%
\right] ,Q_{m})>\rho (X)-\delta \right\}
\end{equation*}%
and the partition of $\Omega $ given by $G_{1}=F_{1}$ and $%
G_{m}=F_{m}\setminus G_{m-1}$. We have from the properties of the module $L_{%
\mathcal{G}}^{q}(\mathcal{F})$ that 
\begin{equation*}
\frac{dQ^{\ast }}{d\mathbb{P}}=\sum_{m=1}^{\infty }\frac{dQ_{m}}{d\mathbb{P}}%
\mathbf{1}_{G_{m}}\in L_{\mathcal{G}}^{q}(\mathcal{F})
\end{equation*}%
and then $Q^{\ast }\in \mathcal{P}^{q}$ with $R(E[-X^{\ast }\frac{dQ^{\ast }%
}{d\mathbb{P}}|\mathcal{G}],Q^{\ast })>\alpha $ on the set $D_{\mathcal{V}%
_{\alpha }}$.

Let $\xi \in V_{\alpha }$. It remains to show that this $Q^{\ast }$
separates $X^{\ast }$ from $V_{\alpha }$ on the set $D_{\mathcal{V}%
_{\alpha}} $. If there existed $A\subseteq D_{\mathcal{V}_{\alpha}}\in 
\mathcal{G}$ such that $E[\xi \frac{dQ^{\ast }}{d\mathbb{P}}\mathbf{1}_{A}|%
\mathcal{G}]\leq E[X^{\ast }\frac{dQ^{\ast }}{d\mathbb{P}}\mathbf{1}_{A}|%
\mathcal{G}]$ on $A$ then $\rho (\xi \mathbf{1}_{A})\geq R(E[-\xi \frac{%
dQ^{\ast }}{d\mathbb{P}}\mathbf{1}_{A}|\mathcal{G}],Q^{\ast })\geq
R(E[-X^{\ast }\frac{dQ^{\ast }}{d\mathbb{P}}\mathbf{1}_{A}|\mathcal{G}%
],Q^{\ast })>\alpha $ on $A $. This implies $\rho (\xi )>\alpha $ on $A$
which is a contradiction unless $\mathbb{P}(A)=0$. Hence $E[\xi \frac{%
dQ^{\ast }}{d\mathbb{P}}|\mathcal{G}]>E[X^{\ast }\frac{dQ^{\ast }}{d\mathbb{P%
}}|\mathcal{G}]$ on $D_{\mathcal{V}_{\alpha}}$ for every $\xi \in V_{\alpha
} $.

\paragraph{UNIQUENESS.}

First we need the following preliminary result. Define the set 
\begin{equation*}
\mathcal{A}(Y,Q)=\left\{ \xi \in L_{\mathcal{G}}^{p}(\mathcal{F})\mid E\left[
-\xi \frac{dQ}{d\mathbb{P}}|\mathcal{G}\right] \geq Y\right\} .
\end{equation*}

\begin{lemma}
\label{program}If $K\in \mathcal{M}(L^{0}(\mathcal{G})\times \mathcal{P}%
^{q}) $, then for each $(Y^{\ast },Q^{\ast })\in L^{0}(\mathcal{G})\times 
\mathcal{P}^{q}$ 
\begin{equation}
K(Y^{\ast },Q^{\ast })=\sup_{Q\in \mathcal{P}^{q}}\inf_{X\in \mathcal{A}%
(Y^{\ast },Q^{\ast })}K\left( E\left[ -X\frac{dQ}{d\mathbb{P}}|\mathcal{G}%
\right] ,Q\right)  \label{claim}
\end{equation}
\end{lemma}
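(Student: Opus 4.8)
\textbf{Plan for the proof of Lemma \ref{program}.}

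The plan is to prove (\ref{claim}) by establishing the two inequalities separately, exploiting the defining properties (i)--(vi) of the class $\mathcal{M}(L^{0}(\mathcal{G})\times\mathcal{P}^{q})$. For the inequality ``$\geq$'' I would argue pointwise on $\mathcal{G}$-measurable sets. Fix $(Y^{\ast},Q^{\ast})$ and observe that $\xi^{\ast}$ with $E[-\xi^{\ast}\tfrac{dQ^{\ast}}{d\mathbb{P}}|\mathcal{G}]\geq Y^{\ast}$ may be chosen so that the inequality is in fact an equality (by the ``take out what is known'' property (ii) and the monotonicity (i), replacing $\xi^{\ast}$ by a suitable element of $L^0(\mathcal{G})$-valued correction, in the spirit of Lemma \ref{monK}). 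Then for the particular choice $Q=Q^{\ast}$ in the outer supremum, the inner infimum over $X\in\mathcal{A}(Y^{\ast},Q^{\ast})$ of $K(E[-X\tfrac{dQ^{\ast}}{d\mathbb{P}}|\mathcal{G}],Q^{\ast})$ is $\leq K(E[-\xi^{\ast}\tfrac{dQ^{\ast}}{d\mathbb{P}}|\mathcal{G}],Q^{\ast})$, but this last quantity could a priori be strictly below $K(Y^{\ast},Q^{\ast})$ since $E[-X\tfrac{dQ^{\ast}}{d\mathbb{P}}|\mathcal{G}]$ ranges over values $\geq Y^{\ast}$ and $K$ is only \emph{increasing} in the first variable. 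So in fact the honest direction here is ``$\leq$'' from this choice; the reverse ``$\geq$'' must come from property (iv).

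For the inequality ``$\leq$'', i.e. $K(Y^{\ast},Q^{\ast})\leq\sup_{Q}\inf_{X\in\mathcal{A}(Y^{\ast},Q^{\ast})}K(E[-X\tfrac{dQ}{d\mathbb{P}}|\mathcal{G}],Q)$, I would use the $\diamond$-evenly $L^{0}(\mathcal{G})$-quasiconcavity (iv). Suppose, for contradiction, that on some $A\in\mathcal{G}$ with $\mathbb{P}(A)>0$ we have $\sup_{Q}\inf_{X\in\mathcal{A}(Y^{\ast},Q^{\ast})}K(E[-X\tfrac{dQ}{d\mathbb{P}}|\mathcal{G}],Q)<\alpha\leq K(Y^{\ast},Q^{\ast})$ on $A$ for some $\alpha\in L^0(\mathcal{G})$. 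Apply (iv) with the roles suitably arranged (here the point ``$(Y^{\ast},Q^{\ast})$'' of (iv) must be chosen as a point where $K$ is \emph{below} $\alpha$ — so I would first produce such a point from the strict inequality on the left, then separate it from the super-level set $\{K\geq\alpha\}$, which contains $(Y^{\ast},Q^{\ast})$). This yields $(S^{\ast},X^{\ast})\in L^{0}_{++}(\mathcal{G})\times L^{p}_{\mathcal{G}}(\mathcal{F})$ with a strict linear separation on $A$. Dividing by $S^{\ast}>0$ and using property (vi) together with $E[\tfrac{dQ}{d\mathbb{P}}|\mathcal{G}]=1$, I would normalise so that the separating slope in the $Q$-direction is an element of $\mathcal{P}^{q}$; concretely, setting $\widehat{Q}$ via $\tfrac{d\widehat{Q}}{d\mathbb{P}}=\tfrac{dQ^{\ast}}{d\mathbb{P}}$ rescaled appropriately and redefined off $A$ by an arbitrary element of $\mathcal{P}^{q}$ (as done in the proof of Lemma \ref{classM}(v)), I get $\widehat{Q}\in\mathcal{P}^{q}$ such that $X^{\ast}\in\mathcal{A}(Y^{\ast},Q^{\ast})$ and $K(E[-X^{\ast}\tfrac{d\widehat{Q}}{d\mathbb{P}}|\mathcal{G}],\widehat{Q})\geq\alpha$ on $A$. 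This forces $\inf_{X\in\mathcal{A}(Y^{\ast},Q^{\ast})}K(E[-X\tfrac{d\widehat{Q}}{d\mathbb{P}}|\mathcal{G}],\widehat{Q})$ — hence the outer supremum — to be $\geq\alpha$ on $A$, contradicting our assumption. I would additionally use the upward directedness (v) to make sure the supremum over $Q$ is attained in a countable-concatenation sense, so that the final comparison genuinely holds on all of $A$.

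The main obstacle I expect is the careful bookkeeping in the application of property (iv): (iv) produces the separating pair $(S^{\ast},X^{\ast})$ only relative to a point \emph{strictly below} the threshold and a super-level set \emph{above} the threshold, so one must be precise about which point plays which role, and one must convert the affine separation (with an arbitrary positive $\mathcal{G}$-scalar $S^{\ast}$ in front of the $Y$-variable) into a bona fide dual element $\widehat{Q}\in\mathcal{P}^{q}$ — this is exactly where properties (vi) and the countable-concatenation structure of $L^{q}_{\mathcal{G}}(\mathcal{F})$ enter, paralleling the normalisation step $\tfrac{dQ_{\varepsilon}}{d\mathbb{P}}=\tfrac{Z_{\varepsilon}}{E[Z_{\varepsilon}|\mathcal{G}]}\mathbf{1}_{D}+W\mathbf{1}_{D^{\vdash}}$ in the ``only if'' part of Theorem \ref{RM}. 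Once the separating functional is identified with a probability in $\mathcal{P}^{q}$, the contradiction is immediate from the definitions.
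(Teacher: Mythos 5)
Your proposal has the two directions of (\ref{claim}) inverted, and the genuinely hard one is never proved. The inequality you attack with property (iv), namely $K(Y^{\ast},Q^{\ast})\leq\sup_{Q}\inf_{X\in\mathcal{A}(Y^{\ast},Q^{\ast})}K(E[-X\frac{dQ}{d\mathbb{P}}|\mathcal{G}],Q)$, is the trivial one: choosing $Q=Q^{\ast}$, every $X\in\mathcal{A}(Y^{\ast},Q^{\ast})$ satisfies $E[-X\frac{dQ^{\ast}}{d\mathbb{P}}|\mathcal{G}]\geq Y^{\ast}$, so monotonicity (i) alone gives $\inf_{X}K(E[-X\frac{dQ^{\ast}}{d\mathbb{P}}|\mathcal{G}],Q^{\ast})\geq K(Y^{\ast},Q^{\ast})$; in particular your worry in the first paragraph, that $K(E[-\xi^{\ast}\frac{dQ^{\ast}}{d\mathbb{P}}|\mathcal{G}],Q^{\ast})$ could fall strictly below $K(Y^{\ast},Q^{\ast})$, is excluded by (i). Moreover the separation argument you sketch for this direction is not sound as stated: producing one $X^{\ast}\in\mathcal{A}(Y^{\ast},Q^{\ast})$ and one $\widehat{Q}$ with $K(E[-X^{\ast}\frac{d\widehat{Q}}{d\mathbb{P}}|\mathcal{G}],\widehat{Q})\geq\alpha$ bounds the value at a single point, not the infimum over all $X\in\mathcal{A}(Y^{\ast},Q^{\ast})$, so no contradiction follows.

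The hard direction is the reverse one: for \emph{every} $Q\in\mathcal{P}^{q}$, possibly very different from $Q^{\ast}$, one must show $\inf_{X\in\mathcal{A}(Y^{\ast},Q^{\ast})}K(E[-X\frac{dQ}{d\mathbb{P}}|\mathcal{G}],Q)\leq K(Y^{\ast},Q^{\ast})$. Your first paragraph only treats $Q=Q^{\ast}$ (where $-Y^{\ast}$ is feasible and gives equality) and then defers the rest to (iv), which your second paragraph spends on the other inequality, so the case $Q\neq Q^{\ast}$ is handled nowhere. The paper's mechanism for it does not use (iv) at all (that property enters only in the uniqueness part of Theorem \ref{RM}); it uses (ii), (iii) and Lemma \ref{L10}: split $\Omega$ into the maximal set $A_{M}$ on which $E[X\frac{dQ}{d\mathbb{P}}|\mathcal{G}]=E[X\frac{dQ^{\ast}}{d\mathbb{P}}|\mathcal{G}]$ for all $X$, where localization (ii) makes the two inner infima coincide, and its complement $A_{M}^{\vdash}$, where some $X^{\ast}$ yields $Z^{\ast}$ with $E[Z^{\ast}\frac{dQ^{\ast}}{d\mathbb{P}}|\mathcal{G}]=0$ but $E[Z^{\ast}\frac{dQ}{d\mathbb{P}}|\mathcal{G}]\neq 0$; then $-Y^{\ast}+\alpha Z^{\ast}$ stays in $\mathcal{A}(Y^{\ast},Q^{\ast})$ for every $\alpha\in L^{0}(\mathcal{G})$ while $E[-(-Y^{\ast}+\alpha Z^{\ast})\frac{dQ}{d\mathbb{P}}|\mathcal{G}]$ sweeps all of $L^{0}(\mathcal{G})$ on $A_{M}^{\vdash}$, so the inner infimum is dominated by $\inf_{Y}K(Y,Q)$, which by property (iii) equals $\inf_{Y}K(Y,Q^{\ast})\leq K(Y^{\ast},Q^{\ast})$. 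Property (iii), which you never invoke, is exactly what controls the ``foreign'' scenarios $Q$; without it, or a substitute for this step, your proof cannot be completed.
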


\begin{proof}
Consider 
\begin{equation*}
\psi (Q,Q^{\ast },Y^{\ast })=\inf_{X\in \mathcal{A}(Y^{\ast },Q^{\ast
})}K\left( E\left[ -X\frac{dQ}{d\mathbb{P}}|\mathcal{G}\right] ,Q\right)
\end{equation*}%
Notice that $E[-X\frac{dQ^{\ast }}{d\mathbb{P}}|\mathcal{G}]\geq Y^{\ast }$
for every $X\in \mathcal{A}(Y^{\ast },Q^{\ast })$ implies 
\begin{equation*}
\psi (Q^{\ast },Q^{\ast },Y^{\ast })=\inf_{X\in \mathcal{A}(Y^{\ast
},Q^{\ast })}K\left( E\left[ -X\frac{dQ^{\ast }}{d\mathbb{P}}|\mathcal{G}%
\right] ,Q^{\ast }\right) \geq K(Y^{\ast },Q^{\ast })
\end{equation*}%
On the other hand $E[Y^{\ast }\frac{dQ^{\ast }}{d\mathbb{P}}|\mathcal{G}%
]=Y^{\ast }$ so that $-Y^{\ast }\in \mathcal{A}(Y^{\ast },Q^{\ast })$ and
the second inequality is actually an equality 
\begin{equation*}
\psi (Q^{\ast },Q^{\ast },Y^{\ast })\leq K\left( E\left[ -(-Y^{\ast })\frac{%
dQ^{\ast }}{d\mathbb{P}}|\mathcal{G}\right] ,Q^{\ast }\right) =K(Y^{\ast
},Q^{\ast }).
\end{equation*}%
\newline
If we show that $\psi (Q,Q^{\ast },Y^{\ast })\leq \psi (Q^{\ast },Q^{\ast
},Y^{\ast })$ for every $Q\in \mathcal{P}^{q}$ then (\ref{claim}) is proved.
To this aim we define%
\begin{equation*}
\mathcal{A}:=\left\{ A\in \mathcal{G}\mid E\left[ X\frac{dQ^{\ast }}{d%
\mathbb{P}}|\mathcal{G}\right] 1_{A}=E\left[ X\frac{dQ}{d\mathbb{P}}|%
\mathcal{G}\right] 1_{A},\;\text{ }\forall \,X\in L_{\mathcal{G}}^{p}(%
\mathcal{F})\right\}
\end{equation*}%
For every $A\in \mathcal{A}$ and every $X\in L_{\mathcal{G}}^{p}(\mathcal{F}%
) $ 
\begin{eqnarray*}
K\left( E\left[ -X\frac{dQ}{d\mathbb{P}}|\mathcal{G}\right] ,Q\right) 
\mathbf{1}_{A} &=&K\left( E\left[ -X\frac{dQ}{d\mathbb{P}}|\mathcal{G}\right]
\mathbf{1}_{A},Q\right) \mathbf{1}_{A} \\
=K\left( E\left[ -X\frac{dQ^{\ast }}{d\mathbb{P}}|\mathcal{G}\right] \mathbf{%
1}_{A},Q^{\ast }\right) \mathbf{1}_{A} &=&K\left( E\left[ -X\frac{dQ^{\ast }%
}{d\mathbb{P}}|\mathcal{G}\right] ,Q^{\ast }\right) \mathbf{1}_{A}
\end{eqnarray*}%
which implies 
\begin{equation}
\psi (Q,Q^{\ast },Y^{\ast })\mathbf{1}_{A}=\psi (Q^{\ast },Q^{\ast },Y^{\ast
})\mathbf{1}_{A}.  \label{7878}
\end{equation}%
Notice that $\mathcal{A}=\left\{ A\in \mathcal{G}\mid Y=0\;\text{on }A\text{%
, }\forall Y\in F\right\} $, where%
\begin{equation*}
F:=\left\{ E\left[ X\frac{dQ^{\ast }}{d\mathbb{P}}|\mathcal{G}\right] -E%
\left[ X\frac{dQ}{d\mathbb{P}}|\mathcal{G}\right] \mid X\in L_{\mathcal{G}%
}^{p}(\mathcal{F})\right\} .
\end{equation*}%
As the conditional expectation is (REG), we may apply Lemma \ref{L10} and
deduce the existence of two maximal sets $A_{M}\in \mathcal{A}$ and $%
A_{M}^{\vdash }\in \mathcal{A}^{\vdash }\mathcal{\ }$such that: $P(A_{M}\cap
A_{M}^{\vdash })=0,$ $P(A_{M}\cup A_{M}^{\vdash })=1$; $E\left[ X\frac{%
dQ^{\ast }}{d\mathbb{P}}|\mathcal{G}\right] \mathbf{1}_{A_{M}}=E\left[ X%
\frac{dQ}{d\mathbb{P}}|\mathcal{G}\right] \mathbf{1}_{A_{M}},\;\forall
\,X\in L_{\mathcal{G}}^{p}(\mathcal{F});$ and $E\left[ -X^{\ast }\frac{%
dQ^{\ast }}{d\mathbb{P}}|\mathcal{G}\right] \neq E\left[ -X^{\ast }\frac{dQ}{%
d\mathbb{P}}|\mathcal{G}\right] $ on $A_{M}^{\vdash }$, for some $X^{\ast
}\in L_{\mathcal{G}}^{p}(\mathcal{F}).$ Considering $A_{M}\in \mathcal{A}$
we then deduce from (\ref{7878})%
\begin{equation*}
\psi (Q,Q^{\ast },Y^{\ast })\mathbf{1}_{A_{M}}=\psi (Q^{\ast },Q^{\ast
},Y^{\ast })\mathbf{1}_{A_{M}}.
\end{equation*}%
Now we consider $A_{M}^{\vdash }\in \mathcal{A}^{\vdash }$ and define $%
Z^{\ast }:=X^{\ast }-E\left[ -X^{\ast }\frac{dQ^{\ast }}{d\mathbb{P}}|%
\mathcal{G}\right] $. Surely $E\left[ Z^{\ast }\frac{dQ^{\ast }}{d\mathbb{P}}%
|\mathcal{G}\right] =0$ and $E\left[ Z^{\ast }\frac{dQ}{d\mathbb{P}}|%
\mathcal{G}\right] \neq 0$ on $A_{M}^{\vdash }$. We deduce that for every $%
\alpha \in L^{0}(\mathcal{G})$, $-Y^{\ast }+\alpha Z^{\ast }\in \mathcal{A}%
(Y^{\ast },Q^{\ast }).$ Also notice that any $Y\in L^{0}(\mathcal{G})$ can
be written as $Y\mathbf{1}_{A_{M}^{\vdash }}=E[(-Y^{\ast }+\alpha
_{Y}Z^{\ast })\frac{dQ}{d\mathbb{P}}|\mathcal{G}]\mathbf{1}_{A_{M}^{\vdash
}} $, with $\alpha _{Y}\in L^{0}(\mathcal{G})$. Finally 
\begin{eqnarray*}
\psi (Q,Q^{\ast },Y^{\ast })\mathbf{1}_{A_{M}^{\vdash }} &\leq &\inf_{\alpha
\in L^{0}(\mathcal{G})}K\left( E\left[ -(-Y^{\ast }+\alpha Z^{\ast })\frac{dQ%
}{d\mathbb{P}}|\mathcal{G}\right] ,Q\right) \mathbf{1}_{A_{M}^{\vdash }} \\
&=&\inf_{Y\in L^{0}(\mathcal{G})}K\left( Y\mathbf{1}_{A_{M}^{\vdash
}},Q\right) \mathbf{1}_{A_{M}^{\vdash }}=\inf_{Y\in L^{0}(\mathcal{G}%
)}K\left( Y\mathbf{1}_{A_{M}^{\vdash }},Q^{\ast }\right) \mathbf{1}%
_{A_{M}^{\vdash }} \\
&\leq &K\left( Y^{\ast },Q^{\ast }\right) \mathbf{1}_{A_{M}^{\vdash }}.
\end{eqnarray*}%
As $P(A_{M}\cup A_{M}^{\vdash })=1,$ we conclude that $\psi (Q,Q^{\ast
},Y^{\ast })\leq \psi (Q^{\ast },Q^{\ast },Y^{\ast })=K(Y^{\ast },Q^{\ast })$
and the claim is proved.
\end{proof}

\bigskip

To prove the uniqueness we show that for every $K\in \mathcal{M}(L^{0}(%
\mathcal{G})\times \mathcal{P}^{q})$ such that 
\begin{equation*}
\rho (X)=\sup_{Q\in \mathcal{P}^{q}}K\left( E\left[ -X\frac{dQ}{d\mathbb{P}}|%
\mathcal{G}\right] ,Q\right) ,
\end{equation*}%
$K$ must satisfy 
\begin{equation*}
K(Y,Q)=\inf_{\xi \in L_{\mathcal{G}}^{p}(\mathcal{F})}\left\{ \rho (\xi
)\mid E\left[ -\xi \frac{dQ}{d\mathbb{P}}|\mathcal{G}\right] \geq Y\right\} .
\end{equation*}%
By the Lemma \ref{program}%
\begin{eqnarray*}
K(Y^{\ast },Q^{\ast }) &=&\sup_{Q\in \mathcal{P}^{q}}\inf_{X\in \mathcal{A}%
(Y^{\ast },Q^{\ast })}K\left( E\left[ -X\frac{dQ}{d\mathbb{P}}|\mathcal{G}%
\right] ,Q\right) \\
&\leq &\inf_{X\in \mathcal{A}(Y^{\ast },Q^{\ast })}\sup_{Q\in \mathcal{P}%
^{q}}K\left( E\left[ -X\frac{dQ}{d\mathbb{P}}|\mathcal{G}\right] ,Q\right)
=\inf_{X\in \mathcal{A}(Y^{\ast },Q^{\ast })}\rho (X).
\end{eqnarray*}%
It remains to prove the reverse inequality, i.e. 
\begin{equation}
K(Y^{\ast },Q^{\ast })\leq \inf_{X\in \mathcal{A}(Y^{\ast },Q^{\ast })}\rho
(X).  \label{1234}
\end{equation}%
Consider the class:%
\begin{equation*}
\mathcal{A}:=\left\{ A\in \mathcal{G}\mid K(Y,Q)\mathbf{1}_{A}\leq K(Y^{\ast
},Q^{\ast })\mathbf{1}_{A}\;\forall \,(Y,Q)\in L^{0}(\mathcal{G})\times 
\mathcal{P}^{q}\right\} .
\end{equation*}%
Notice that $\mathcal{A}=\left\{ A\in \mathcal{G}\mid Z\leq Z_{0}\;\text{on }%
A\text{, }\forall Z\in F\right\} $, where%
\begin{equation*}
F:=\left\{ K(Y,Q)\mid (Y,Q)\in L^{0}(\mathcal{G})\times \mathcal{P}%
^{q}\right\} .
\end{equation*}%
and $Z_{0}=K(Y^{\ast },Q^{\ast }).$ In order to apply Lemma \ref{L10}, let $%
A_{i}\in \mathcal{A}^{\vdash }$ be a sequence of disjoint sets and $%
Z_{i}=K(Y_{i},Q_{i})$ be the corresponding element in $F$. From $K(Y\mathbf{1%
}_{A},Q)\mathbf{1}_{A}=K(Y,Q)\mathbf{1}_{A}$ we deduce (as in Remark \ref%
{remA} i) that $K(\sum\limits_{i=1}^{\infty }Y_{i}\mathbf{1}_{A_{i}},Q_{j})%
\mathbf{1}_{A_{\infty }}=\sum\limits_{i=1}^{\infty }K(Y_{i},Q_{j})\mathbf{1}%
_{A_{i}}$, with $A_{\infty }=\cup A_{i}$. From $K(Y,Q_{1})\mathbf{1}%
_{A}=K(Y,Q_{2})\mathbf{1}_{A}$ if $\frac{dQ_{1}}{d\mathbb{P}}\mathbf{1}_{A}=%
\frac{dQ_{2}}{d\mathbb{P}}\mathbf{1}_{A}$, $A\in \mathcal{G}$, we obtain 
\begin{eqnarray*}
K(\sum\limits_{i=1}^{\infty }Y_{i}\mathbf{1}_{A_{i}},\sum\limits_{j=1}^{%
\infty }Q_{j}\mathbf{1}_{A_{j}})\mathbf{1}_{A_{\infty }}
&=&\sum\limits_{i=1}^{\infty }K(Y_{i},\sum\limits_{j=1}^{\infty }Q_{j}%
\mathbf{1}_{A_{j}})\mathbf{1}_{A_{i}} \\
&=&\sum\limits_{i=1}^{\infty }K(Y_{i},Q_{i})\mathbf{1}_{A_{i}}=\sum%
\limits_{i=1}^{\infty }Z_{i}\mathbf{1}_{A_{i}}
\end{eqnarray*}%
showing that $\sum\limits_{i=1}^{\infty }Z_{i}\mathbf{1}_{A_{i}}\in F.$ From
Lemma \ref{L10} we may deduce the existence of two maximal sets $A_{M}\in 
\mathcal{A}$ and $A_{M}^{\vdash }\in \mathcal{A}^{\vdash }\mathcal{\ }$such
that: $P(A_{M}\cap A_{M}^{\vdash })=0,$ $P(A_{M}\cup A_{M}^{\vdash })=1$; $%
K(Y,Q)\mathbf{1}_{A_M}\leq K(Y^{\ast },Q^{\ast })\mathbf{1}_{A_M}\;\forall
\,(Y,Q)\in L^{0}(\mathcal{G})\times \mathcal{P}^{q};$ and 
\begin{equation}
K(Y^{\ast },Q^{\ast })<K(\overline{Y},\overline{Q})\text{ on }A_{M}^{\vdash
},  \label{2345}
\end{equation}%
for some $(\overline{Y},\overline{Q})\in L^{0}(\mathcal{G})\times \mathcal{P}%
^{q}.$ On $A_{M}\in \mathcal{A}$ the inequality (\ref{1234}) is obviously
true and we need only to show it on the set $A_{M}^{\vdash }$.

From (\ref{2345}) we can easily build a $\beta \in L^{0}(\mathcal{G})$ such
that $K(Y^{\ast },Q^{\ast })<\beta \leq K(\overline{Y},\overline{Q})$ on $%
A_{M}^{\vdash }$ and $\beta $ is arbitrarily close to $K(Y^{\ast },Q^{\ast
}) $ on $A_{M}^{\vdash }$. An example of such $\beta $ is obtained by taking 
$\lambda \downarrow 0$ in the family:%
\begin{eqnarray*}
\beta _{\lambda } &:=&1_{A_{M}^{\vdash }} \left[ \lambda K(\overline{Y},%
\overline{Q})+(1-\lambda )K(Y^{\ast },Q^{\ast })\right] 1_{\left\{ K(%
\overline{Y},\overline{Q})<\infty \right\} \cap \left\{ K(Y^{\ast },Q^{\ast
})>-\infty \right\} } \\
&+&1_{A_{M}^{\vdash }}1_{\left\{ K(\overline{Y},\overline{Q})=\infty
\right\} }\left[ (K(Y^{\ast },Q^{\ast })+\lambda )1_{\left\{ K(Y^{\ast
},Q^{\ast })>-\infty \right\} }-\frac{1}{\lambda }1_{\left\{ K(Y^{\ast
},Q^{\ast })=-\infty \right\} }\right] .
\end{eqnarray*}%
Since the set $U_{\beta }:=\{(Y,Q)\in L^{0}(\mathcal{G})\times \mathcal{P}%
^{q}\mid K(Y,Q)\geq \beta \text{ on }A_{M}^{\vdash }\}$ is not empty, the
assumption that $K$ is $\diamond $-evenly $L^{0}(\mathcal{G})$-quasiconcave
implies the existence of $(S^{\ast },X^{\ast })\in L_{++}^{0}(\mathcal{G}%
)\times L_{\mathcal{G}}^{p}(\mathcal{F})$ with 
\begin{equation*}
Y^{\ast }S^{\ast }+E\left[ X^{\ast }\frac{dQ^{\ast }}{d\mathbb{P}}\mid 
\mathcal{G}\right] <YS^{\ast }+E\left[ X^{\ast }\frac{dQ}{d\mathbb{P}}\mid 
\mathcal{G}\right] \text{ on }A_{M}^{\vdash }
\end{equation*}%
for every $(Y,Q)\in U_{\beta }$.

We claim that for every $(Y,Q)\in U_{\beta }$ 
\begin{equation*}
Y+E\left[ \widehat{X}\frac{dQ}{d\mathbb{P}}\mid \mathcal{G}\right] >0\text{
on }A_{M}^{\vdash }\text{ ,}
\end{equation*}%
where $\widehat{X}:=\frac{X^{\ast }}{S^{\ast }}+\Lambda $ and $\Lambda
:=-Y^{\ast }-E[\frac{X^{\ast }}{S^{\ast }}\frac{dQ^{\ast }}{d\mathbb{P}}\mid 
\mathcal{G}]$. Indeed, for every $(Y,Q)\in U_{\beta }$ 
\begin{eqnarray*}
&&Y^{\ast }S^{\ast }+E\left[ X^{\ast }\frac{dQ^{\ast }}{d\mathbb{P}}\mid 
\mathcal{G}\right] <YS^{\ast }+E\left[ X^{\ast }\frac{dQ}{d\mathbb{P}}\mid 
\mathcal{G}\right] \text{ on }A_{M}^{\vdash }\text{ ,} \\
\text{implies }\; &&Y^{\ast }+E\left[ \left( \frac{X^{\ast }}{S^{\ast }}%
+\Lambda \right) \frac{dQ^{\ast }}{d\mathbb{P}}\mid \mathcal{G}\right] <Y+E%
\left[ \left( \frac{X^{\ast }}{S^{\ast }}+\Lambda \right) \frac{dQ}{d\mathbb{%
P}}\mid \mathcal{G}\right] \text{ on }A_{M}^{\vdash }\text{ ,} \\
\text{implies }\; &&Y^{\ast }+E\left[ \widehat{X}\frac{dQ^{\ast }}{d\mathbb{P%
}}\mid \mathcal{G}\right] <Y+E\left[ \widehat{X}\frac{dQ}{d\mathbb{P}}\mid 
\mathcal{G}\right] \text{ on }A_{M}^{\vdash }\text{ },
\end{eqnarray*}%
i.e. the claim holds, as $E[\widehat{X}\frac{dQ^{\ast }}{d\mathbb{P}}\mid 
\mathcal{G}]=-Y^{\ast }$.

For every $Q\in \mathcal{P}^{q}$ define $Y_{Q}:=E\left[ -\widehat{X}\frac{dQ%
}{d\mathbb{P}}\mid \mathcal{G}\right] $. We show that 
\begin{equation}
K(Y_{Q},Q)<\beta \text{ on }A_{M}^{\vdash }.  \label{2424}
\end{equation}%
Suppose by contradiction that there exists $B\subseteq A_{M}^{\vdash }$, $%
B\in \mathcal{G}$, $P(B)>0$, such that $K(Y_{Q},Q)\geq \beta $ on $B$. Take $%
(Y_{1},Q_{1})\in U_{\beta }$ and define $\widetilde{Y}:=Y_{Q}\mathbf{1}%
_{B}+Y_{1}\mathbf{1}_{B^{C}}$ and $\widetilde{Q}\in \mathcal{P}^{q}$ by 
\begin{equation*}
\frac{d\tilde{Q}}{d\mathbb{P}}=\frac{dQ}{d\mathbb{P}}\mathbf{1}_{B}+\frac{%
dQ_{1}}{d\mathbb{P}}\mathbf{1}_{B^{C}}.
\end{equation*}%
Thus $K(\widetilde{Y},\widetilde{Q})\geq \beta $ on $A_{M}^{\vdash }$ and $%
\widetilde{Y}+E\left[ \widehat{X}\frac{d\tilde{Q}}{d\mathbb{P}}\mid \mathcal{%
G}\right] >0$ on $A_{M}^{\vdash }$, which implies $Y_{Q}+E\left[ \widehat{X}%
\frac{dQ}{d\mathbb{P}}\mid \mathcal{G}\right] >0$ on $B$ and this is
impossible and (\ref{2424}) is proven.

Since $\widehat{X}\in \mathcal{A}(Y^{\ast },Q^{\ast })$ we can conclude that 
\begin{eqnarray*}
K(Y^{\ast },Q^{\ast })\mathbf{1}_{A_{M}^{\vdash }} &\leq &\inf_{X\in 
\mathcal{A}(Y^{\ast },Q^{\ast })}\sup_{Q\in \mathcal{P}^{q}}K\left( E\left[
-X\frac{dQ}{d\mathbb{P}}\mid \mathcal{G}\right] ,Q\right) \mathbf{1}%
_{A_{M}^{\vdash }} \\
&\leq &\sup_{Q\in \mathcal{P}^{q}}K\left( E\left[ -\widehat{X}\frac{dQ}{d%
\mathbb{P}}\mid \mathcal{G}\right] ,Q\right) \mathbf{1}_{A_{M}^{\vdash
}}\leq \beta \mathbf{1}_{A_{M}^{\vdash }}.
\end{eqnarray*}%
As $\beta $ is arbitrarily close to $K(Y^{\ast },Q^{\ast }),$ the equality
must hold and then we obtain: 
\begin{equation*}
K(Y^{\ast },Q^{\ast })=\inf_{X\in \mathcal{A}(Y^{\ast },Q^{\ast })}\rho (X)%
\text{ on }A_{M}^{\vdash }.
\end{equation*}%
This concludes the proof of Theorem \ref{RM}.

\section{Appendix}

\begin{remark}
\label{remMAX}By Lemma 2.9 in \cite{FKV09}, we know that any non-empty class 
$\mathcal{A}$ of subsets of a sigma algebra $\mathcal{G}$ has a supremum $%
\emph{ess}.\sup \{\mathcal{A}\}\in \mathcal{G}$ and that if $\mathcal{A}$ is
closed with respect to finite union (i.e. $A_{1},A_{2}\in \mathcal{A}%
\Rightarrow A_{1}\cup A_{2}\in \mathcal{A}$) then there is a sequence $%
A_{n}\in \mathcal{A}$ such that $\emph{ess}.\sup \{\mathcal{A}%
\}=\bigcup\limits_{n\in \mathbb{N}}A_{n}$. Obviously, if $\mathcal{A}$ is
closed with respect to countable union then $\emph{ess}.\sup \{\mathcal{A}%
\}=\bigcup\limits_{n\in \mathbb{N}}A_{n}:=A_{M}\in \mathcal{A}$ is the
maximal element in $\mathcal{A}$.
\end{remark}

The next Lemma is used several times in the proofs of the paper. It says
that for any subset $F\subset L^{0}(\mathcal{G})$ that is \textquotedblleft
closed w.r.to pasting\textquotedblright\ it is possible to determine a
maximal set $A_{M}\in \mathcal{G}$ (which may have zero probability) such
that $Y1_{A_{M}}\geq 0$ $\forall Y\in F$ and one element $\overline{Y}\in F$
for which $\overline{Y}<0$ on the complement of $A_{M}$.

\begin{lemma}
\label{L10}With the symbol $\trianglerighteq $ denote any one of the binary
relations $\geq ,$ $\leq ,$ $=,$ $>$, $<$ and with $\vartriangleleft $ its
negation. Consider a class $F\subseteq \bar{L}^{0}(\mathcal{G})$ of random
variables, $Y_{0}\in \bar{L}^{0}(\mathcal{G})$ and the classes of sets 
\begin{eqnarray*}
\mathcal{A}:= &\{A\in &\mathcal{G}\mid \forall \,Y\in F\text{ }%
Y\trianglerighteq Y_{0}\text{ on }A\}, \\
\mathcal{A}^{\vdash }:= &\{A^{\vdash }\in &\mathcal{G}\mid \exists \,Y\in F%
\text{ s.t. }Y\vartriangleleft Y_{0}\text{ on }A^{\vdash }\}.
\end{eqnarray*}%
Suppose that for any sequence of disjoint sets $A_{i}^{\vdash }\in \mathcal{A%
}^{\vdash }$ and the associated r.v. $Y_{i}\in F$ we have $\sum_{1}^{\infty
}Y_{i}1_{A_{i}^{\vdash }}\in F$. Then there exist two maximal sets $A_{M}\in 
\mathcal{A}$ and $A_{M}^{\vdash }\in \mathcal{A}^{\vdash }\mathcal{\ }$such
that $P(A_{M}\cap A_{M}^{\vdash })=0$, $P(A_{M}\cup A_{M}^{\vdash })=1$ and 
\begin{eqnarray*}
&&Y\trianglerighteq Y_{0}\text{ on }A_{M}\text{, }\forall Y\in F\text{ } \\
&&\overline{Y}\vartriangleleft Y_{0}\text{ on }A_{M}^{\vdash }\text{, for
some }\overline{Y}\in F.
\end{eqnarray*}
\end{lemma}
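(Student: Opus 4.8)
The plan is to build $A_M^{\vdash}$ first as the essential supremum of $\mathcal{A}^{\vdash}$ and then set $A_M:=\Omega\setminus A_M^{\vdash}$, the whole argument reducing to the single structural fact that $\mathcal{A}^{\vdash}$ is stable under countable unions. Once this is established, Remark \ref{remMAX} (i.e.\ Lemma 2.9 in \cite{FKV09}) yields a sequence $A_n^{\vdash}\in\mathcal{A}^{\vdash}$ with $A_M^{\vdash}:=\bigcup_n A_n^{\vdash}=\mathrm{ess.}\sup\{\mathcal{A}^{\vdash}\}\in\mathcal{A}^{\vdash}$, the maximal element of the class, together with a witness $\overline{Y}\in F$ satisfying $\overline{Y}\vartriangleleft Y_0$ on $A_M^{\vdash}$. (If $\mathcal{A}^{\vdash}=\varnothing$ — which cannot happen unless $F=\varnothing$, since a $\mathbb{P}$-null set lies vacuously in $\mathcal{A}^{\vdash}$ — one simply takes $A_M=\Omega$.) Then $A_M:=\Omega\setminus A_M^{\vdash}$ gives $\mathbb{P}(A_M\cap A_M^{\vdash})=0$ and $\mathbb{P}(A_M\cup A_M^{\vdash})=1$ for free.

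To prove closure under countable union I would take $A_i^{\vdash}\in\mathcal{A}^{\vdash}$, $i\in\mathbb{N}$, and disjointify them: $B_1:=A_1^{\vdash}$, $B_i:=A_i^{\vdash}\setminus\bigcup_{j<i}A_j^{\vdash}$, so that the $B_i$ are pairwise disjoint, $B_i\subseteq A_i^{\vdash}$ and $\bigcup_i B_i=\bigcup_i A_i^{\vdash}$. Choosing for each $i$ an element $Y_i\in F$ with $Y_i\vartriangleleft Y_0$ on $A_i^{\vdash}$ (hence on $B_i$; for the indices with $\mathbb{P}(B_i)=0$ any fixed element of $F$ serves vacuously), the standing hypothesis of the lemma, applied to the disjoint sets $B_i\in\mathcal{A}^{\vdash}$ and their associated $Y_i$, gives $\overline{Y}:=\sum_i Y_i\mathbf{1}_{B_i}\in F$. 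Since $\overline{Y}=Y_i\vartriangleleft Y_0$ on each $B_i$, we get $\overline{Y}\vartriangleleft Y_0$ on $\bigcup_i B_i=\bigcup_i A_i^{\vdash}$, so $\bigcup_i A_i^{\vdash}\in\mathcal{A}^{\vdash}$ with witness $\overline{Y}$; the same computation, applied to the sequence $(A_n^{\vdash})_n$, supplies the witness $\overline{Y}$ for $A_M^{\vdash}$ invoked above.

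It remains to check that $A_M\in\mathcal{A}$ and that both sets are maximal. If $A_M\notin\mathcal{A}$, there would be $Y\in F$ with $B:=\{Y\vartriangleleft Y_0\}\cap A_M\in\mathcal{G}$ of positive probability; then $Y$ witnesses $B\in\mathcal{A}^{\vdash}$, and since $B\cap A_M^{\vdash}=\varnothing$, closure under union gives $A_M^{\vdash}\cup B\in\mathcal{A}^{\vdash}$ with $\mathbb{P}(A_M^{\vdash}\cup B)>\mathbb{P}(A_M^{\vdash})$, contradicting the maximality of $A_M^{\vdash}$; hence $Y\trianglerighteq Y_0$ on $A_M$ for every $Y\in F$. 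Maximality of $A_M$ in $\mathcal{A}$ is then automatic: for any $A\in\mathcal{A}$ one has $\overline{Y}\trianglerighteq Y_0$ on $A$ but $\overline{Y}\vartriangleleft Y_0$ on $A_M^{\vdash}$, forcing $\mathbb{P}(A\cap A_M^{\vdash})=0$, i.e.\ $A\subseteq A_M$ $\mathbb{P}$-a.s. I expect the only point requiring care to be the passage from an arbitrary countable family in $\mathcal{A}^{\vdash}$ to a genuinely disjoint subfamily before invoking the pasting hypothesis — precisely what the $B_i$ achieve — everything else being routine manipulation of essential suprema.
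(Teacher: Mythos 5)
Your proof is correct and follows essentially the same route as the paper: in both, the key step is disjointifying a countable family in $\mathcal{A}^{\vdash }$ so that the pasting hypothesis produces a single witness $\overline{Y}\in F$, proving closure of $\mathcal{A}^{\vdash }$ under countable unions, after which Remark \ref{remMAX} supplies the maximal element. The only (harmless) difference is organizational: the paper applies Remark \ref{remMAX} to both classes and then shows $P(A_{M}\cap A_{M}^{\vdash })=0$ and $P(A_{M}\cup A_{M}^{\vdash })=1$ by a contradiction on $D=\Omega \setminus (A_{M}\cup A_{M}^{\vdash })$, whereas you apply it only to $\mathcal{A}^{\vdash }$, define $A_{M}$ as the complement of $A_{M}^{\vdash }$, and verify $A_{M}\in \mathcal{A}$ (and its maximality, via $\overline{Y}$) by the same kind of contradiction argument.
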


\begin{proof}
Notice that $\mathcal{A}$ and $\mathcal{A}^{\vdash }$ are closed with
respect to \emph{countable} union. This claim is obvious for $\mathcal{A}$.
For $\mathcal{A}^{\vdash }$, suppose that $A_{i}^{\vdash }\in \mathcal{A}%
^{\vdash }$ and that $Y_{i}\in F$ satisfies $P(\left\{ Y_{i}\vartriangleleft
Y_{0}\right\} \cap A_{i}^{\vdash })=P(A_{i}^{\vdash }).$ Defining $%
B_{1}:=A_{i}^{\vdash }$, $B_{i}:=A_{i}^{\vdash }\setminus B_{i-1}$, $%
A_{\infty }^{\vdash }:=\bigcup\limits_{i=1}^{\infty }A_{i}^{\vdash
}=\bigcup\limits_{i=1}^{\infty }B_{i}$ we see that $B_{i}$ are disjoint
elements of $\mathcal{A}^{\vdash }$ and that $Y^{\ast }:=\sum_{1}^{\infty
}Y_{i}1_{B_{i}}\in F$ satisfies $P(\left\{ Y^{\ast }\vartriangleleft
Y_{0}\right\} \cap A_{\infty }^{\vdash })=P(A_{\infty }^{\vdash })$ and so $%
A_{\infty }^{\vdash }\in \mathcal{A}^{\vdash }$.

The Remark \ref{remMAX} guarantees the existence of two sets $A_{M}\in 
\mathcal{A}$ and $A_{M}^{\vdash }\in \mathcal{A}^{\vdash }$ such that:

(a) $P(A\cap (A_{M})^{C})=0$ for all $A\in \mathcal{A}$,

(b) $P(A^{\vdash }\cap (A_{M}^{\vdash })^{C})=0$ for all $A^{\vdash }\in 
\mathcal{A}^{\vdash }$.

Obviously, $P(A_{M}\cap A_{M}^{\vdash })=0,$ as $A_{M}\in \mathcal{A}$ and $%
A_{M}^{\vdash }\in \mathcal{A}^{\vdash }$. To show that $P(A_{M}\cup
A_{M}^{\vdash })=1$, let $D:=\Omega \setminus \left\{ A_{M}\cup
A_{M}^{\vdash }\right\} \in \mathcal{G}$. By contradiction suppose that $%
P(D)>0$. As $D\subseteq (A_{M})^{C}$, from condition (a) we get $D\notin 
\mathcal{A}$. Therefore, $\exists \overline{Y}\in F$ s.t. $P(\left\{ 
\overline{Y}\trianglerighteq Y_{0}\right\} \cap D)<P(D),$ i.e. $P(\left\{ 
\overline{Y}\vartriangleleft Y_{0}\right\} \cap D)>0$. If we set $B:=\left\{ 
\overline{Y}\vartriangleleft Y_{0}\right\} \cap D$ then it satisfies $%
P(\left\{ \overline{Y}\vartriangleleft Y_{0}\right\} \cap B)=P(B)>0$ and, by
definition of $\mathcal{A}^{\vdash }$, $B$ belongs to $\mathcal{A}^{\vdash }$%
. On the other hand, as $B\subseteq D\subseteq (A_{M}^{\vdash })^{C}$, $%
P(B)=P(B\cap (A_{M}^{\vdash })^{C}),$ and from condition (b) $P(B\cap
(A_{M}^{\vdash })^{C})=0$, which contradicts $P(B)>0$.
\end{proof}

\end{document}